\newtheorem{lem}{Lemma}[section]
\newtheorem{thm}[lem]{Theorem}
\newcommand{\cmark}{\ensuremath{\checkmark}}
\newcommand{\xmark}{\ensuremath{\times}}
\title{Non-Splitting Coflow Scheduling with Provable Guarantees in Heterogeneous Parallel Networks}
\author{Chi-Yeh~Chen 
\\ Department of Computer Science and Information
Engineering, \\ National Cheng Kung University, \\
Taiwan, ROC. \\
chency@csie.ncku.edu.tw.}
\begin{document}

\maketitle
\begin{abstract}
As a prominent network abstraction, coflow models efficiently capture communication patterns in data centers. Since coflow scheduling in large-scale data centers is $\mathcal{NP}$-hard, the existing literature has predominantly focused on limited environments with $m=2$ network cores, relying on flow splitting, which introduces substantial operational overhead. Crucially, no approximation algorithm with provable performance guarantees has been proposed for the more practical, non-splitting coflow scheduling problem, even for the $m=2$ case, let alone for general hybrid architectures. To bridge this critical gap, this paper investigates the non-splitting problem within a hybrid, heterogeneous parallel network featuring multiple network cores ($m \ge 2$) composed of Electronic Packet Switches (EPS), not-all-stop Optical Circuit Switches (OCS), and all-stop OCS. We propose three unified polynomial-time approximation algorithms that minimize the makespan and the total weighted coflow completion time across this hybrid environment without incurring any splitting overhead. 
Let $\tau$ denote the maximum flow degree across all ports in the network, and let $m$ be the number of network cores.
To minimize the makespan, our algorithm achieves an approximation ratio of $2\min\left\{2\tau-1, m+\tau-1\right\}$ in the hybrid architecture. To minimize the total weighted coflow completion time, our algorithm achieves an approximation ratio of $16\min\left\{2\tau-1, 2m+\tau-1\right\}$ in the hybrid architecture. Moreover, we characterize the approximation ratios of our algorithm under different architectural combinations.
\begin{keywords}
Scheduling algorithms, approximation algorithms, coflow, datacenter network, heterogeneous parallel network.
\end{keywords}
\end{abstract}

\section{Introduction}\label{sec:introduction}
Application-level computation and communication patterns in modern data centers are increasingly modeled using the coflow abstraction. Distributed frameworks characterized by structured traffic patterns, such as MapReduce~\cite{Dean2008}, Hadoop~\cite{Shvachko2010, borthakur2007hadoop}, Dryad~\cite{isard2007dryad}, and Spark~\cite{zaharia2010spark}, derive substantial performance gains from application-aware network scheduling~\cite{Chowdhury2014, Chowdhury2015, Zhang2016, Agarwal2018}. Typically, the computation stage of these frameworks generates massive intermediate data flows that must be transferred across distributed machines during the subsequent communication stage. A coflow abstracts this collection of independent parallel flows between two distinct sets of machines. Crucially, the completion time of a coflow is bottlenecked by its last finishing flow~\cite{Chowdhury2012, shafiee2018improved}. To satisfy these intensive data transfer demands, data centers must employ highly robust transmission mechanisms and optimized scheduling frameworks.

This paper explores the challenges of coflow scheduling across Electronic Packet Switches (EPS), not-all-stop Optical Circuit Switches (OCS), and all-stop OCS architectures. While EPSs enable flexible link-level bandwidth allocation and support flow preemption, OCSs~\cite{Zhang2021, Tan2021, Li2022} present a fundamentally different paradigm. OCSs provide substantially higher transmission rates and superior energy efficiency; however, they suffer from a rigid structural constraint: each input or output port can establish only one data circuit at a time. Moreover, reconfiguring these circuits introduces a distinct temporal overhead termed the reconfiguration delay~\cite{Zhang2021}. Minimizing the frequency of such reconfigurations is therefore paramount to maximizing the transmission efficiency of OCS-based architectures.

This paper investigates a scheduling problem for a set of coflows, each characterized by a demand matrix $D^{(k)}=\left(d_{ijk}\right)_{i,j=1}^{N}$. To minimize the makespan (maximum completion time), denoted as $\max_{k}\left\{C_{k}\right\}$, these coflows must be efficiently scheduled across $m$ network cores. In this framework, $C_{k}$ represents the completion time of coflow $k$ in the resulting schedule. Each network core $q$ is characterized by a specific transmission rate $r_{q}$; consequently, the time required to transmit a flow $(i, j, k)$ through core $q$ is defined as $\frac{d_{ijk}}{r_{q}}$.

\begin{table*}[htbp]
\centering
\caption{Comparison of Related Works}
\label{tab:network_comparison}
\begin{tabular}{|l|c|c|c|c|c|}
\hline
\multicolumn{1}{|c|}{\multirow{2}{*}{Works}} & \multirow{2}{*}{EPS-enable} & \multicolumn{2}{c|}{OCS-enable} & No. of & \multirow{2}{*}{Performance Guarantee} \\ \cline{3-4}
\multicolumn{1}{|c|}{} & & \textit{all-stop} & \textit{not-all-stop} & Network Cores & \\ \hline
\cite{chowdhury2011managing, Chowdhury2014, Chowdhury2015, Zhang2016, zhao2015rapier} & \cmark & \xmark & \xmark & 1 & \xmark \\ \hline
\cite{Qiu2015, khuller2016brief, shafiee2018improved, Li2016} & \cmark & \xmark & \xmark & 1 & \cmark \\ \hline
\cite{Xu2018} & \xmark & \cmark & \xmark & 1 & \xmark \\ \hline
\cite{Zhang2019, Tan2021} & \xmark & \cmark & \xmark & 1 & \cmark \\ \hline
\cite{huang2016, Zhang2021} & \xmark & \xmark & \cmark & 1 & \cmark \\ \hline
\cite{Li2022, 10217994} & \cmark & \xmark & \cmark & 2 & \xmark \\ \hline
\cite{Liu2015} & \cmark & \cmark & \xmark & 2 & \xmark \\ \hline
\cite{10411848} & \cmark & \cmark & \xmark & 2 & \cmark \\ \hline
\cite{11200997} & \cmark & \cmark & \cmark & 2 & \cmark \\ \hline
\cite{CHEN2023104752,chen2023efficient1, chen2023efficient2,Huang2020} & \cmark  & \xmark & \xmark & $m$ & \cmark \\ \hline
\textbf{Our Work} & \cmark & \cmark & \cmark & $m$ & \cmark \\ \hline
\end{tabular}
\end{table*}

\subsection{Related Work}
Chowdhury and Stoica~\cite{Chowdhury2012} pioneered the coflow abstraction to characterize communication patterns within data centers. Since then, the coflow scheduling problem has been extensively studied under two foundational paradigms: EPS-based and OCS-based networks. In this section, we review the historical literature and state-of-the-art advancements within each architectural domain. To provide a holistic perspective, Table~\ref{tab:network_comparison} outlines a taxonomical and qualitative comparison of the closely related literature across several critical design dimensions. This table is adapted and expanded from the one presented in~\cite{10411848, 11200997}.

\subsubsection{Coflow Scheduling in EPS-based Networks}
For minimizing the total weighted completion time with zero coflow release times, prior literature has successfully lowered the approximation ratio from $\frac{64}{3}$ to $4$~\cite{Qiu2015, ahmadi2020scheduling, khuller2016brief, shafiee2018improved}. This bound was similarly tightened from $\frac{76}{3}$ to $5$ for scenarios with arbitrary release times~\cite{Qiu2015, ahmadi2020scheduling, khuller2016brief, shafiee2018improved}. When shifting to heterogeneous parallel networks, the approximation ratio was shown to be $O\left(\frac{\log m}{\log \log m}\right)$~\cite{CHEN2023104752}.

For identical parallel networks, Chen~\cite{chen2023efficient1} designed an algorithm yielding approximation guarantees of $6-\frac{2}{m}$ and $5-\frac{2}{m}$ for arbitrary and zero release times, respectively. To handle coflows with precedence constraints, Chen~\cite{chen2023efficient2} further presented an $O(\chi)$-approximation scheme, where $\chi$ denotes the coflow number of the longest path in the directed acyclic graph (DAG). Regarding makespan minimization for a single coflow, Huang \textit{et al.}~\cite{Huang2020} initially established an $O(m)$-approximation algorithm for heterogeneous parallel networks with $m$ network cores. This theoretical frontier was recently advanced by Chen~\cite{CHEN2023104752} via a breakthrough $O\left(\frac{\log m}{\log \log m}\right)$-approximation algorithm optimized for heterogeneous architectures.

\subsubsection{Coflow Scheduling in OCS-based Networks}
OCS-based networks are generally classified into two reconfiguration paradigms: all-stop and not-all-stop. Under the all-stop model, Tan et al.~\cite{Tan2021} pioneered the theoretical baseline, achieving approximation ratios of $2$ for single coflow scheduling and $8K$ for multiple coflow scheduling in pure OCS networks (where $K$ represents the total number of coflows). Wang et al.~\cite{10411848} later provided approximation guarantees for both single and multiple coflow scheduling in hybrid-switched data center networks (DCNs). Notably, their deployment is a special case of our generalized framework, restricted to a single EPS and a single all-stop OCS configuration. Moreover, their methodology hinges on splittable flow routing across switch fabrics, whereas our framework naturally enforces non-splittable routing, entirely avoiding flow-splitting overheads.

Regarding the not-all-stop model, Huang et al.~\cite{huang2016} delivered the first breakthrough by developing a constant-factor approximation algorithm for single coflow scheduling and a heuristic for multiple coflows in pure OCS environments. Zhang et al.~\cite{Zhang2021} subsequently advanced this paradigm by introducing a $4$-approximation algorithm for multiple coflow scheduling in pure OCS networks. For hybrid architectures, Li and Shen~\cite{Li2022} provided a foundational study that jointly optimizes optical-electrical switching characteristics and structural coflow constraints. Most recently, Jiang et al.~\cite{10217994} designed an online heuristic scheme specifically engineered to minimize the total coflow completion time (CCT) across hybrid network fabrics.

\subsection{Our Contributions}
This paper addresses the coflow scheduling problem in hybrid, heterogeneous parallel networks with multiple network cores ($m \ge 2$) consisting of Electronic Packet Switches (EPS), not-all-stop Optical Circuit Switches (OCS), and all-stop OCS. Crucially, prior work~\cite{10411848, 11200997} for $m=2$ cores relies heavily on flow splitting, which introduces substantial operational overhead. To date, no approximation algorithm with provable performance guarantees has been developed for the more practical, non-splitting coflow scheduling problem—neither for the $m=2$ baseline nor for generalized hybrid architectures. To bridge this critical gap, we propose the first unified polynomial-time approximation algorithm tailored for hybrid environments that entirely circumvents flow splitting. To construct a theoretical framework, we systematically analyze our algorithm's performance across each constituent switch architecture for $m \ge 2$, laying the foundation for deriving the comprehensive hybrid guarantee. The primary contributions of this paper are summarized as follows:
\begin{itemize}
\item \textbf{Novel Non-Splitting Algorithm Design for Hybrid Cores:} We develop two unified polynomial-time approximation algorithms to minimize the makespan and a unified polynomial-time approximation algorithm to minimize the total weighted coflow completion time
in multi-core parallel networks that integrate EPS, not-all-stop OCS, and all-stop OCS. Crucially, our scheme strictly enforces non-splitting coflow scheduling, eliminating the operational overhead typical of prior art.

\item \textbf{Theoretical Foundations via Standalone Component Analysis:} Serving as the analytical foundation for the hybrid system, we derive the exact approximation guarantees of our non-splitting algorithm within each standalone switch environment for general $m \ge 2$. Let $\tau$, and $m$ denote the maximum flow degree across all ports in the network, and the number of network cores, respectively. To minimize the makespan, we prove that the algorithm guarantees an approximation ratio of:
\begin{itemize}
\item $\min\left\{\tau, m\right\}$ in a pure EPS environment,
\item $2\min\left\{\tau, m\right\}$ in a pure not-all-stop OCS environment, and
\item $2\min\left\{2\tau-1, m+\tau-1\right\}$ in a pure all-stop OCS environment.
\end{itemize}
Notably, for the conventional $m=2$ case, these bounds tighten to $2$, $4$, and $2\tau+2$, respectively, establishing exceptionally tight theoretical guarantees without incurring any splitting overhead.

To minimize the total weighted coflow completion time, we prove that the algorithm guarantees an approximation ratio of:
\begin{itemize}
\item $8\min\left\{\tau, 2 m\right\}$ in a pure EPS environment,
\item $16 \min\left\{\tau, 2 m\right\}$ in a pure not-all-stop OCS environment, and
\item $16\min\left\{2\tau-1, 2m+\tau-1\right\}$ in a pure all-stop OCS environment.
\end{itemize}

\item \textbf{Comprehensive Performance Upper-Bound for Hybrid Networks:} By systematically synthesizing the constituent bounds, we prove that the overall performance guarantee of our algorithm in a fully hybrid network architecture ($m \ge 2$) is upper-bounded by the performance of the least-performing switch architecture in the network.
\end{itemize}

\subsection{Organization}
The remainder of this paper is organized as follows. Section~\ref{sec:Preliminaries} formulates the system model and introduces essential notations. Sections~\ref{sec:Algorithm3}, \ref{sec:Algorithm4} and \ref{sec:Algorithm5} detail our proposed unified approximation algorithms. Section~\ref{sec:Results} evaluates and compares the performance of our approaches against existing baselines. Finally, Section~\ref{sec:Conclusion} summarizes our findings and concludes the paper.

\section{Notation and Preliminaries}\label{sec:Preliminaries}
For the coflow scheduling problem, we consider a set of coflows $\mathcal{K}$ and a collection of heterogeneous network cores $\mathcal{M}$, aiming to minimize the makespan. The underlying network is modeled as a set $\mathcal{M}$ of $m$ large-scale, $N \times N$ non-blocking hybrid switches. To characterize network heterogeneity, $\mathcal{M}$ is partitioned into three disjoint subsets: $\mathcal{M} = \mathcal{M}_{e} \cup \mathcal{M}_{o}^{1} \cup \mathcal{M}_{o}^{2}$, such that $m = m_{e} + m_{o}^{1} + m_{o}^{2}$. Here, $\mathcal{M}_{e}$, $\mathcal{M}_{o}^{1}$, and $\mathcal{M}_{o}^{2}$ denote $m_{e}$ Electronic Packet Switches (EPSs), $m_{o}^{1}$ all-stop Optical Circuit Switches (OCSs), and $m_{o}^{2}$ not-all-stop OCSs, respectively. Each core switch features $N$ input links connected to $N$ source servers and $N$ output links connected to $N$ destination servers. The notation and terminology used throughout this paper are summarized in Table~\ref{tab:notations}.

\begin{table}[ht]
\caption{Notation and Terminology}
    \centering
        \begin{tabular}{||c|p{5in}||}
    \hline
     $N$      & The number of input/output ports.         \\
    \hline
		$\mathcal{M}$  & The set of network cores.\\
    \hline
		$m$       & The number of network cores.\\
    \hline
     $\mathcal{I}, \mathcal{J}$ & The source server set and the destination server set.         \\
    \hline    
     $\mathcal{K}$ & The set of coflows.         \\
    \hline
     $n$      & The number of flows.         \\
    \hline
     $D^{(k)}$     & The demand matrix of coflow $k$. \\
    \hline    
     $d_{ijk}$     & The size of the flow to be transferred from input $i$ to output $j$ in coflow $k$.   \\
    \hline     
		 $\mathcal{F}$ & $\mathcal{F}=\left\{(i, j, k)| d_{ijk}>0, \forall k\in \mathcal{K}, \forall i\in \mathcal{I}, \forall j\in \mathcal{J} \right\}$ is the set of all flows. \\
    \hline     
		 $\mathcal{F}_{i}$ & $\mathcal{F}_{i}=\left\{(i, j, k)| d_{ijk}>0, \forall k\in \mathcal{K}, \forall j\in \mathcal{J} \right\}$ is the set of flows with source $i$. \\
		\hline 					
		 $\mathcal{F}_{j}$ & $\mathcal{F}_{j}=\left\{(i, j, k)| d_{ijk}>0, \forall k\in \mathcal{K}, \forall i\in \mathcal{I} \right\}$ is the set of flows with destination $j$. \\
		\hline 		
$p_{ijkq}$	& The transmission time of flow $(i, j, k)$ on the network core $q$. \\
		\hline 						
		$\tau$  & The maximum flow degree across all ports in the network, namely, $\max\left\{\max_{i} |\mathcal{F}_{i}|,\max_{j} |\mathcal{F}_{j}|\right\}$. \\
    \hline  
		 $C_k$     & The completion time of coflow $k$.   \\
    \hline     
     $C_{ijk}$ & The completion time of flow $(i, j, k)$. \\
    \hline     
     $r_k$     & The released time of coflow $k$.  \\
    \hline     
     $w_{k}$   &  The weight of coflow $k$. \\		
		\hline 
		$T$    & $T$ is a sufficiently large time horizon to complete all the coflows. \\
		\hline 
		$t_{l}$ & $t_{l}=2^l$ is the time point for $l=0, 1, \ldots, T$. \\		
		\hline
        \end{tabular}
    \label{tab:notations}
\end{table}

\subsection{Network Model}
Optical Circuit Switches (OCSs) and Electronic Packet Switches (EPSs) are the core switching technologies in data center networks (DCNs). This paper considers a hybrid-switched DCN that integrates $m_{o}^{1} + m_{o}^{2}$ OCSs and $m_{e}$ EPSs into a non-blocking network core with $N$ ingress and $N$ egress ports. Each port connects to both switch fabrics, typically interfacing with Top-of-Rack (ToR) switches to aggregate traffic. At the ingress, incoming traffic is buffered in Virtual Output Queues (VOQs). Due to its circuit-switching nature, an OCS enforces a strict "port constraint," meaning each ingress port can serve only one VOQ at a time using its full bandwidth. Conversely, an EPS supports packet-level multiplexing, allowing multiple VOQs to share port capacity simultaneously.

To adapt to dynamic workloads, OCSs periodically reconfigure their circuits using either an all-stop or a not-all-stop paradigm. The all-stop model is synchronous; reconfiguring any single circuit halts all other transmissions. Conversely, the not-all-stop model is asynchronous, interrupting only the specific ports undergoing reconfiguration. While the not-all-stop model reduces overhead and enhances link utilization, it significantly increases scheduling complexity.

Let $r_{q}$ denote the link rate (transmission capacity per unit time) of network core $q \in \mathcal{M}$. Assuming uniform link capacity within each switch, each source-destination pair connects across distinct network cores via $m$ parallel links. Let $\mathcal{I} = \{1, 2, \ldots, N\}$ and $\mathcal{J} = \{1, 2, \ldots, N\}$ represent the sets of source and destination servers, respectively. Consequently, each network core is modeled as a bipartite graph with nodes $\mathcal{I}$ and $\mathcal{J}$. Finally, let $\delta_{q}$ represent the fixed reconfiguration delay of network core $q \in \mathcal{M}_{o}^{1} \cup \mathcal{M}_{o}^{2}$.

\subsection{Traffic Model}
A coflow consists of a collection of independent data flows whose overall completion time is determined by the last finishing flow. Let $D^{(k)}=\left(d_{ijk}\right)_{i,j=1}^{N}$ represent the demand matrix for coflow $k$, where $d_{ijk}$ denotes the volume of data to be transmitted from input port $i$ to output port $j$. Each flow is identified by a tuple $(i, j, k)$ with $i \in \mathcal{I}$, $j \in \mathcal{J}$, and $k\in \mathcal{K}$, and is assumed to consist of discrete integer data units. Following~\cite{Qiu2015}, we assume that all constituent flows within a coflow are released simultaneously.

Crucially, we treat flows as atomic entities that cannot be split across multiple network cores. Because flow splitting introduces prohibitive packet reordering overhead, it is unsuited for practical deployment. We therefore enforce a flow-level traffic assignment strategy: all data within an individual flow $(i, j, k)$ must be routed through the same network core, although distinct flows within the same coflow may be distributed across different cores.

To formalize the traffic structure, let $i$ and $j$ index the source (input port) and destination (output port), respectively. We define the flow sets clustered by sources, destinations, and the entire network as follows:
\begin{itemize}
\item $\mathcal{F}_{i}=\left\{(i, j, k) \mid d_{ijk}>0, \forall j \in \mathcal{J}, k \in \mathcal{K} \right\}$,
\item $\mathcal{F}_{j}=\left\{(i, j, k) \mid d_{ijk}>0, \forall i \in \mathcal{I}, k \in \mathcal{K} \right\}$,
\item $\mathcal{F}=\left\{(i, j, k) \mid d_{ijk}>0, \forall i \in \mathcal{I}, j \in \mathcal{J}, k \in \mathcal{K} \right\}$.
\end{itemize}

The transmission time $p_{ijkq}$ of a flow $(i, j, k) \in \mathcal{F}$ on network core $q \in \mathcal{M}$ is contingent upon the specific switch architecture. For an EPS core ($q \in \mathcal{M}_{e}$), it is simply defined as $p_{ijkq} = \frac{d_{ijk}}{r_{q}}$. In contrast, for an OCS core ($q \in \mathcal{M}_{o}^{1} \cup \mathcal{M}_{o}^{2}$), the transmission time must explicitly account for the temporal overhead incurred by circuit reconfiguration, which is formalized as follows:
\begin{equation*}
p_{ijkq} = \begin{cases} 
0, & \text{if } d_{ijk} = 0, \\ 
\frac{d_{ijk}}{r_{q}} + \delta_{q}, & \text{if } d_{ijk} > 0.
\end{cases}
\end{equation*}

\subsection{Problem Formulation}
To maximize application-centric performance, we focus on two critical objectives. First, we aim to minimize the network makespan, the total time required to complete all active flows. Second, we aim to minimize the total weighted coflow completion time, formalized as $\sum_{k \in \mathcal{K}} w_{k}C_{k}$ where $w_{k}$ and $C_k$ are the weight of coflow $k$ and the completion time of coflow $k$, respectively. Recall that $\mathcal{F}$ represents the full set of flows generated by all coflows in $\mathcal{K}$. Assuming zero release times (i.e., a synchronized arrival time for all coflows), let $t^{arr}$ denote this shared arrival time, and let $t^{F}_{ijk}$ be the completion time of flow $(i, j, k)$. Consequently, the scheduling problem in heterogeneous parallel networks is formulated to minimize the makespan $C_{\max}$, defined as follows:
\begin{equation*}
C_{\max} = \max_{(i,j,k)\in \mathcal{F}} \left(t^{F}_{ijk} - t^{arr} \right).
\end{equation*}

\section{Approximation Algorithm for Minimizing the Makespan}\label{sec:Algorithm3}
This section introduces our algorithmic framework for addressing the coflow scheduling problem in heterogeneous parallel networks for minimizing the makespan. The proposed algorithm decouples the optimization into a hierarchical, two-stage process. In the first stage, we solve the core allocation subproblem to map each flow to a designated network core. In the second stage, building directly upon this mapping, we formulate a scheduling policy to govern the precise execution order and transmission timing of the flows assigned to each core. To formalize the allocation stage, we define a routing decision variable $x_{ijkq}$ for each flow $(i, j, k)\in \mathcal{F}$ and network core $q \in \mathcal{M}$, where $x_{ijkq} = 1$ if flow $(i, j, k)$ is routed through core $q$, and $0$ otherwise. Consequently, the allocation subproblem can be relaxed and formulated as the following Linear Programming (LP) model:
\begin{subequations}\label{coflow:interval}
\begin{align}
& \text{LP($T$):}  &&      &   & \tag{\ref{coflow:interval}} \\
&  && \sum_{q\in \mathcal{M}} x_{ijkq} = 1, && \forall (i, j, k)\in \mathcal{F} \label{interval:a} \\
&  && \sum_{(i, j, k) \in \mathcal{F}_{i}} p_{ijkq}\cdot x_{ijkq} \leq T, && \forall i\in \mathcal{I}, \forall q\in \mathcal{M} \label{interval:b} \\
&  && \sum_{(i, j, k) \in \mathcal{F}_{j}} p_{ijkq}\cdot x_{ijkq} \leq T, && \forall j\in \mathcal{J}, \forall q\in \mathcal{M} \label{interval:c} \\
&  && x_{ijkq} \geq 0, && \forall (i, j, k)\in \mathcal{F}, \forall q\in \mathcal{M} \label{interval:d} \\
&  && x_{ijkq} = 0, && \text{if~} p_{ijkq}>T, \forall (i, j, k)\in \mathcal{F}, \forall q\in \mathcal{M} \label{interval:e} 
\end{align}
\end{subequations}

Specifically, constraint (\ref{interval:a}) serves as the assignment constraint, ensuring that each flow is mapped to exactly one network core. Constraints (\ref{interval:b}) and (\ref{interval:c}) bound the cumulative transmission time at each input and output port of a network core by the makespan $T$. Constraint (\ref{interval:d}) enforces the non-negativity of the relaxed decision variables $x_{ijkq}$. Finally, if a specific network core cannot successfully deliver flow $(i, j, k)$ within the time horizon $T$, constraint (\ref{interval:e}) forces its corresponding routing variable to $0$.

To circumvent the non-linearity introduced by constraint (\ref{interval:e}), we employ a binary search procedure (Algorithm~\ref{Alg1}) to find the optimal makespan. The procedure is initialized by encapsulating the optimal makespan between the lower bound $l$ and the upper bound $u$. In each iteration, the search window is bifurcated by evaluating the feasibility of $\text{LP}(d)$ with $d=\left\lfloor \frac{u+l}{2}\right\rfloor$. If $\text{LP}(d)$ is feasible, the upper bound $u$ is updated to $d$; otherwise, the lower bound $l$ is set to $d+1$. This process terminates when $u=l$, yielding the desired extreme-point solution. Specifically, Lines~\ref{Alg1:1}--\ref{Alg1:5} initialize the search boundaries, while Lines~\ref{Alg1:6}--\ref{Alg1:7} execute the core iterative search loop. Notably, when solving the linear program $\text{LP}(T)$, any network core $q$ with a reconfiguration delay $\delta_{q} \geq T$ can be pruned from the topology beforehand, as it cannot sustain any transmission within the deadline $T$. Consequently, without loss of generality, we assume $\delta_{q} < T$ for all remaining core switches throughout the scheduling analysis.

\begin{algorithm}
\caption{Binary Search Procedure}
    \begin{algorithmic}[1]
				\STATE $L_{i}=\max_{i\in \mathcal{I}} \left\{\sum_{k}\sum_{j}\max_{q}\left\{p_{ijkq}\right\}\right\}$ \label{Alg1:1}			
				\STATE $L_{j}=\max_{j\in \mathcal{J}} \left\{\sum_{k}\sum_{i}\max_{q}\left\{p_{ijkq}\right\}\right\}$ \label{Alg1:2}	
				\STATE $\rho=\max \left\{\max_{i}\left\{\sum_{k}\sum_{j}d_{ijk}\right\},\max_{j}\left\{\sum_{k}\sum_{i}d_{ijk}\right\}\right\}$ \label{Alg1:3}
				\STATE $u=\max\left\{L_{i}, L_{j}\right\}$. \label{Alg1:4}
				\STATE $l=\frac{\rho}{\sum_{q} r_{q}}$. \label{Alg1:5}
				\WHILE{$u\neq l$} \label{Alg1:6}
							 \STATE $d=\left\lfloor \frac{u+l}{2}\right\rfloor$.
							 \IF{LP($d$) is feasible}
							     \STATE $u=d$
							 \ELSE
							     \STATE $l=d+1$
							 \ENDIF \label{Alg1:7}
				\ENDWHILE
				\STATE \textbf{return} a feasible solution $\textbf{x}$ to LP($u$). \label{Alg1:8}
   \end{algorithmic}
\label{Alg1}
\end{algorithm}

\begin{algorithm}
\caption{Rounding Procedure}
    \begin{algorithmic}[1]
				\STATE Run Algorithm~\ref{Alg1} to obtain $\textbf{x}$
				\STATE Set $\hat{x}_{ijkq}=0$ for each $(i, j, k)\in \mathcal{F}$ and $q\in \mathcal{M}$.
				\FOR{each port $i\in \mathcal{I}$}
					\STATE Create $u_{iq1}, \cdots, u_{iqc_{iq}}$ for each $q\in \mathcal{M}$ where $c_{iq}=\left\lceil \sum_{(i, j, k) \in \mathcal{F}_{i}} x_{ijkq}\right\rceil$.
					\STATE Set $x'(u_{iqc})=0$ for each $q\in \mathcal{M}$ and $1\leq c \leq c_{iq}$.
					\FOR{each flow $(i, j, k)\in \mathcal{F}_{i}$ in non-increasing order of $d_{ijk}$, breaking ties arbitrarily}
				    \FOR{each $q\in \mathcal{M}$ and $x_{ijkq}>0$}
								\STATE Find the minimum index $r$ such that $x'(u_{iqr})<1$
								\IF{$x_{ijkq} \leq 1-x'(u_{iqr})$}
									\STATE add an edge $(u_{iqr},(i,j,k))$ to $E_{i}$
									\STATE set $x'(u_{iqr}) = x'(u_{iqr})+x_{ijkq}$
								\ELSE
									\STATE add two edges $(u_{iqr},(i,j,k))$ and $(u_{iq,r+1},(i,j,k))$ to $E_{i}$
									\STATE set $x'(u_{iq,r+1}) = x'(u_{iqr})-(1-x'(u_{iqr}))$
									\STATE set $x'(u_{iqr}) = 1$
								\ENDIF	
						\ENDFOR						
				\ENDFOR
				\STATE Find a matching $M$ that exactly matches all flow nodes in $B_{i}(x)$.
				\STATE If there exist an edge $(u_{iqr},(i,j,k))$ in $M$, then set $\hat{x}_{ijkq}=1$.
				\ENDFOR
				\STATE \textbf{return} a rounding solution $\hat{\textbf{x}}$.
   \end{algorithmic}
\label{Alg2}
\end{algorithm}

To operationalize the routing schedule, the fractional solution $\textbf{x}$ from Algorithm~\ref{Alg1} must be transformed into an integer solution $\hat{\textbf{x}}$. Algorithm~\ref{Alg2} is a rounding procedure to find an integer solution. We adapt the rounding framework from Shmoys and Tardos~\cite{shmoys1993approximation}. Because our decision variables are tightly coupled with the multi-dimensional capacities of both input and output ports, replicating the exact approximation ratio of~\cite{shmoys1993approximation} is theoretically precluded. Nonetheless, this paradigm enables us to distribute flows across network cores in a highly balanced manner under the guidance of the fractional baseline $\textbf{x}$. Crucially, because our theoretical proof paradigm departs from~\cite{shmoys1993approximation}, the rounding procedure can be substantially streamlined. Specifically, we entirely circumvent the computationally intensive minimum-cost matching phase; a standard bipartite matching suffices, eliminating the need to assign edge weights. Our algorithm sequentially rounds variables for each input port $i$. Because each decision variable is structurally anchored to an input-output port pair, satisfying all input ports $i$ inherently and simultaneously satisfies all output ports $j$.

We now detail the graph-theoretic rounding mechanism for the general scenario. For each input port $i$, we construct a bipartite graph $B_{i}(\textbf{x}) = (U_{i}, \mathcal{F}_{i}, E_{i})$. The two disjoint vertex sets consist of the flow nodes $\mathcal{F}_{i}$ and the network core slots:
\begin{equation*}
U_{i} =\left\{u_{iqr} \mid q\in \mathcal{M}, r=1,\dots, c_{iq}\right\},
\end{equation*}
where $c_{iq}=\left\lceil \sum_{(i, j, k) \in \mathcal{F}_{i}} x_{ijkq}\right\rceil$ denotes the number of sub-nodes required for input port $i$ on core $q$. Structurally, these sub-nodes model the fractional capacity chunks allocated to port $i$.

The edge set $E_{i}$ is mapped based on the fractional solution $\textbf{x}$. Specifically, for each core $q\in \mathcal{M}$, we initialize the accumulated loads of all sub-nodes to zero, i.e., $x'(u_{iqr})=0$ for $1\leq r \leq c_{iq}$. For each flow $(i, j, k)\in \mathcal{F}_{i}$ with $x_{ijkq}>0$, we allocate its fractional value to core $q$ at the first non-saturated slot $u_{iqr}$ (where $x'(u_{iqr})<1$). Two cases govern this mapping procedure:
\begin{itemize}
\item \textbf{Case 1 (No Overflow):} If $x'(u_{iqr}) + x_{ijkq} \leq 1$, the allocation fits within the current slot. We insert a single edge $(u_{iqr},(i,j,k))$ into $E_{i}$ and update the slot load to $x'(u_{iqr}) = x'(u_{iqr})+x_{ijkq}$.

\item \textbf{Case 2 (Slot Overflow):} If $x'(u_{iqr}) + x_{ijkq} > 1$, the allocation exceeds the slot boundary. The flow is split across two adjacent slots by inserting two edges, $(u_{iqr},(i,j,k))$ and $(u_{iq,r+1},(i,j,k))$, into $E_{i}$. The respective slot loads are updated to $x'(u_{iqr})=1$ and $x'(u_{iq,r+1}) = x_{ijkq} - (1 - x'(u_{iqr}))$.
\end{itemize}
Once $B_{i}(\textbf{x})$ is constructed, we compute a standard bipartite matching $M$ that perfectly covers the flow vertex set $\mathcal{F}_{i}$. The matched edges dictate the deterministic core assignment for each flow; specifically, we set the integer variable $\hat{x}_{ijkq}=1$ if any sub-node $u_{iqr}$ of core $q$ is matched with flow $(i,j,k)$ in $M$.

To visually anchor this procedure, Figure~\ref{fig:1} visualizes an exemplary bipartite graph constructed from the fractional solution $\textbf{x}$, illustrating the potential routing paths between individual flows and their candidate core slots. Subsequently, Figure~\ref{fig:2} showcases a deterministic matching derived from this bipartite structure, validating that each flow is uniquely and atomically mapped to a single network core.

\begin{figure}[!ht]
    \centering
        \includegraphics[width=3.5in]{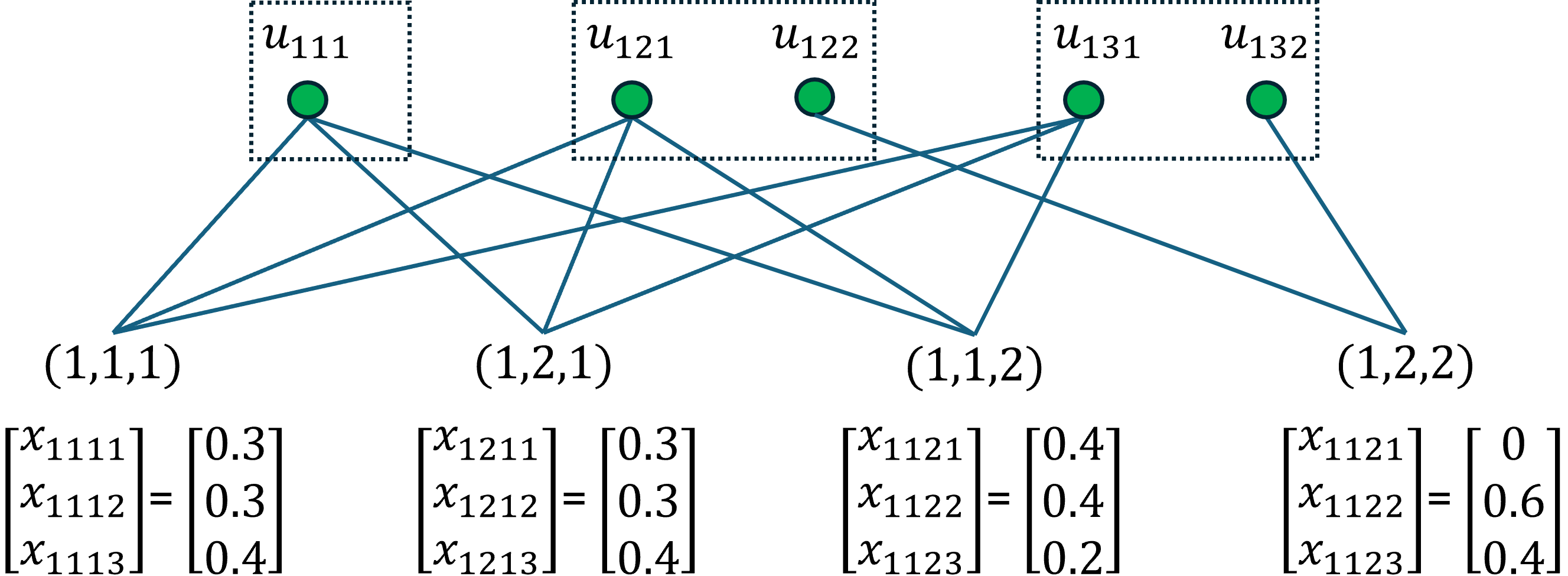}
    \caption{A constructing bipartite graph $B(\textbf{x})$.}
    \label{fig:1}
\end{figure}

\begin{figure}[!ht]
    \centering
        \includegraphics[width=3.5in]{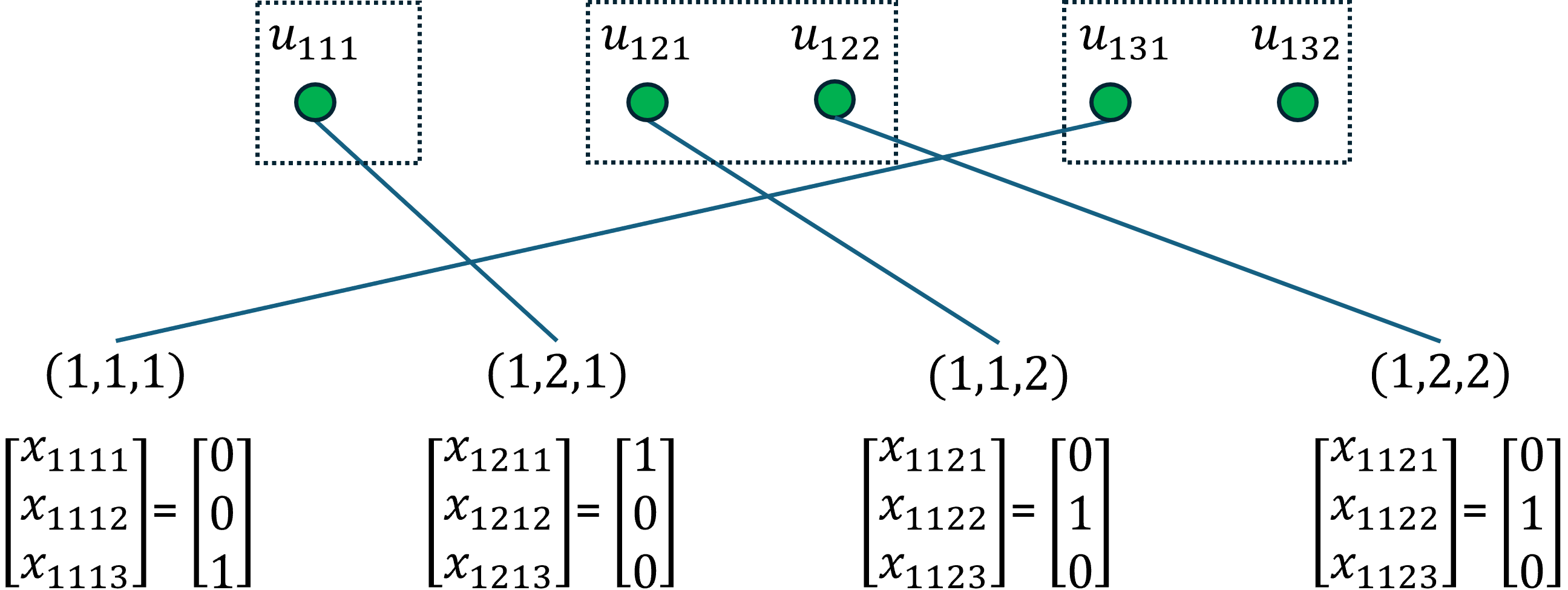}
    \caption{A rounding bipartite graph $M$.}
    \label{fig:2}
\end{figure}

Algorithm~\ref{Alg3} operationalizes the overall scheduling framework. Taking the deterministic flow-to-core mappings from the previous stage as input, the algorithm dispatches traffic to individual cores and orchestrates the transmission timeline. Specifically, tailored to the unique hardware architecture of each network core, the localized scheduling policy is established by leveraging either the Birkhoff-von Neumann decomposition (see Appendix)~\cite{marcus1959diagonals, Qiu2015}, Sunflow~\cite{huang2016}, or Reco-Sin~\cite{Tan2021} subroutines, respectively.

\begin{algorithm}
\caption{Scheduling Procedure}
    \begin{algorithmic}[1]
				\STATE Run Algorithm~\ref{Alg2} to obtain an allocated solution.	
				\FOR{each $q\in \mathcal{M}$ do in parallel}
						\IF{the network core $q$ is an EPS}
							\STATE Apply Birkhoff-von Neumann decomposition~\cite{marcus1959diagonals, Qiu2015} to schedule the allocated flows.
						\ELSIF{the network core $q$ is a not-all-stop OCS}
							\STATE Apply Sunflow~\cite{huang2016} to schedule the allocated flows.
						\ELSE
							\STATE Apply Reco-Sin~\cite{Tan2021} to schedule the allocated flows.
						\ENDIF
				\ENDFOR
   \end{algorithmic}
\label{Alg3}
\end{algorithm}

\subsection{Analysis}
This section establishes the theoretical performance guarantees of our proposed algorithmic framework by deriving its exact approximation ratios. Let $T^{*}$ denote the optimal makespan baseline determined via the binary search procedure (Algorithm~\ref{Alg1}). We initiate our analysis by quantifying the upper bounds on the cumulative transmission time at each port relative to $T^*$ after executing the rounding procedure (Algorithm~\ref{Alg2}). Serving as the analytical building blocks, these port-level bounds are subsequently leveraged to evaluate the overall approximation ratios across the three foundational network core architectures: EPS, not-all-stop OCS, and all-stop OCS.

Formally, a fundamental property of linear programming states that any extreme-point solution (or basic feasible solution) to $\text{LP}(T)$ contains at most as many nonzero fractional variables as the number of tight constraints, omitting trivial non-negativity or boundary constraints~\cite{lenstra1990approximation, 10.5555/1965254}. Leveraging this structural property, we establish the following critical lemma regarding the sparsity of the fractional assignment.

\begin{lem}\label{lem:1}
Any extreme point solution to LP($T$) has at most $n + 2Nm$ nonzero variables.
\end{lem}
\begin{proof}
Constraint (\ref{interval:a}) comprises $n$ individual constraints, while Constraints (\ref{interval:b}), (\ref{interval:c}) collectively contribute $2Nm$ constraints. It follows that an extreme point solution to LP($T$) contains at most $n + 2Nm$ nonzero variables.
\end{proof}

Let $\textbf{x}$ be an extreme-point solution to $\text{LP}(T)$. For any flow $(i, j, k) \in \mathcal{F}$, we distinguish its routing status in the fractional schedule as follows: a flow is designated as integrally assigned if its routing weight is concentrated entirely on a single network core (i.e., $x_{ijkq} = 1$ for some $q \in \mathcal{M}$); otherwise, if its demand is split across multiple cores (i.e., $0 < x_{ijkq} < 1$), it is designated as fractionally assigned.

\begin{lem}\label{lem:2}
For any extreme point solution to LP($T$), at most $2Nm$ flows are assigned fractionally, meaning that at least $n - 2Nm$ flows must receive integral assignments.
\end{lem}
\begin{proof}
Let $\textbf{x}$ be an extreme point solution to LP($T$), and let $\alpha$ and $\beta$ denote the number of integrally and fractionally assigned flows in $\textbf{x}$, respectively. By definition, each fractionally assigned flow must be distributed across at least two network cores, thereby contributing at least two nonzero entries to $\textbf{x}$. Consequently, we establish the following relationships:
\begin{equation*}
\alpha + \beta = n \quad \text{and} \quad \alpha + 2\beta \leq n + 2Nm.
\end{equation*}
The subsequent inequality follows directly from Lemma~\ref{lem:1}.
Solving this system yields the desired bounds: $\alpha \geq n - 2Nm$ and $\beta \leq 2Nm$.
\end{proof}

Let $T_{i}^{r}$ and $T_{j}^{r}$ denote the post-rounding transmission times at input port $i$ and output port $j$, respectively, derived from the allocation in Algorithm~\ref{Alg2}. For notational brevity, the explicit index for the network core is suppressed here without introducing ambiguity. To capture the localized traffic density, we define $\tau$ as the maximum flow degree across all ports in the network, formalized as:
\begin{equation*}
\tau = \max \left\{ \max_{i \in \mathcal{I}} |\mathcal{F}_{i}|, \, \max_{j \in \mathcal{J}} |\mathcal{F}_{j}| \right\}.
\end{equation*}

\begin{lem}\label{thm:1}
Algorithm~\ref{Alg2} guarantees that $T_{i}^{r}, T_{j}^{r} \leq \min\left\{\tau, 2Nm+1\right\} T^{*}$ for all $i \in \mathcal{I}$ and $j \in \mathcal{J}$. 
\end{lem}
\begin{proof}
For brevity, we present only the results for port $i$, as the derivation and results for port $j$ are symmetric and analogous.
Consider an arbitrary port $i$ within any given network core $q$. Let $\mathcal{A}_{1}$ be the set of integrally assigned flows allocated to this port, and let $\mathcal{A}_{0}$ be the set of fractionally assigned flows allocated to this port. We can obtain the following inequality 
\begin{equation}\label{thm:eq1}
T_{i}^{r}=\sum_{(i, j, k)\in \mathcal{A}_{1}} p_{ijkq} + \sum_{(i, j, k)\in \mathcal{A}_{0}} p_{ijkq} \leq (2Nm+1)T^*.
\end{equation}
Based on the constraints defined in LP($T$), we can derive that $\sum_{(i, j, k)\in \mathcal{A}_{1}} p_{ijkq}\leq T^*$ and $p_{ijkq}\leq T^*$ for each $(i, j, k)\in \mathcal{A}_{0}$. Furthermore, Lemma~\ref{lem:2} guarantees that the cardinality of the set $\mathcal{A}_{0}$ is bounded above by $2Nm$. Combining these conditions yields the desired inequality.

Furthermore, the total number of flows assigned to this port is bounded above by $\tau$, which directly implies that $T_{i}^{r}\leq \tau T^{*}$ holds. Combining inequality (\ref{thm:eq1}) with this result yields the desired bound $T_{i}^{r} \leq \min\left\{\tau, 2Nm+1\right\} T^{*}$. 
\end{proof}

We now establish the theoretical performance guarantees for the complete scheduling workflow outlined in Algorithm~\ref{Alg3}. Let $T^{s}$ denote the actual makespan attained by our schedule. For clarity of exposition, we adopt a staged analytical roadmap: we first derive approximation ratios for mono-architecture network cores—specifically focusing on all-EPS, all-not-all-stop OCS, and all-all-stop OCS environments—and subsequently generalize these foundational bounds to the complex, hybrid-switched network core.

\begin{thm}\label{thm:2}
If the network cores in the DCN are all-EPS, Algorithm~\ref{Alg3} achieves an approximation guarantee of $\min\left\{\tau, 2Nm+1\right\}$.
\end{thm}
\begin{proof}
According to the Theorem~\ref{BvN2} in Appendix, the makespan of EPS is the maximum transmission time among all ports. Therefore, according to Lemma~\ref{thm:1}, we can obtain this result.
\end{proof}

\begin{thm}\label{thm:3}
If the network cores in the DCN are all-not-all-stop OCS, Algorithm~\ref{Alg3} achieves an approximation guarantee of $2\min\left\{\tau, 2Nm+1\right\}$.
\end{thm}
\begin{proof}
According to the analysis by Huang et al.~\cite{huang2016}, the makespan scheduled under Sunflow is bounded above by the aggregate transmission times of the input port $i$ and output port $j$ associated with the flow that completes last. We thus have $T^{s}\leq T_{i}^{r}+ T_{j}^{r}$. Incorporating the bounds from Lemma~\ref{thm:1} into this inequality directly yields the desired result.
\end{proof}

\begin{thm}\label{thm:4}
If the network cores in the DCN are all-all-stop OCS, Algorithm~\ref{Alg3} achieves an approximation guarantee of $2\min\left\{2\tau-1, 2Nm+\tau\right\}$.
\end{thm}
\begin{proof}
Assume network core $q$ serves as the bottleneck core that determines the completion time. Let $\rho_{q}$ denote the maximum traffic load across all ports in network core $q$, and assume this critical port is $i$. According to the analysis by Tan et al. ~\cite{ Tan2021}, the makespan is bounded above by $T^{s}\leq 2(\frac{\rho_{q}}{r_{q}}+\tau\delta_{q})$. Since $T_{i}^{r}=\tau' \delta_{q}+\frac{\rho_{q}}{r_{q}}$ holds by definition, this upper bound can be rewritten as $T^{s}\leq 2(T_{i}^{r}+(\tau-1)\delta_{q})$ where $\tau'\geq 1$ is the number of flows within the port $i$. Moreover, because any network core with $\delta \geq T^{*}$ can be safely pruned from consideration, it follows that $\delta < T^{*}$, which implies $T^{s}\leq 2(T_{i}^{r}+(\tau-1) T^{*})$. Integrating the results from Lemma~\ref{thm:1} into this inequality establishes the desired bound.
\end{proof}

\section{An Improved Approximation Algorithm for Minimizing the Makespan}\label{sec:Algorithm4}
This section introduces a novel rounding scheme tailored for the linear program $\text{LP}(T)$, the procedure of which is outlined in Algorithm~\ref{Alg5}. For each flow $(i, j, k)\in \mathcal{F}$, the scheme identifies and rounds up the largest variable $x_{ijkq}$, where $q\in \mathcal{M}$. This approach successfully bypasses the performance bottleneck of Algorithm~\ref{Alg2}, where the approximation ratio is bounded by $2Nm+1$. Furthermore, Algorithm~\ref{Alg6} is adapted from Algorithm~\ref{Alg3}, with modifications applied exclusively to the rounding mechanism.

\begin{algorithm}
\caption{Rounding Procedure}
    \begin{algorithmic}[1]
				\STATE Run Algorithm~\ref{Alg1} to obtain $\textbf{x}$
				\STATE Set $\hat{x}_{ijkq}=0$ for each $(i, j, k)\in \mathcal{F}$ and $q\in \mathcal{M}$.
				\FOR{each flow $(i, j, k)\in \mathcal{F}$}	
						\STATE $q^* = \arg \max_{q\in \mathcal{M}} x_{ijkq}$
						\STATE $\hat{x}_{ijkq^*}=1$
				\ENDFOR
				\STATE \textbf{return} a rounding solution $\hat{\textbf{x}}$.
   \end{algorithmic}
\label{Alg5}
\end{algorithm}

\begin{algorithm}
\caption{Scheduling Procedure}
    \begin{algorithmic}[1]
				\STATE Run Algorithm~\ref{Alg5} to obtain an allocated solution.	
				\FOR{each $q\in \mathcal{M}$ do in parallel}
						\IF{the network core $q$ is an EPS}
							\STATE Apply Birkhoff-von Neumann decomposition~\cite{marcus1959diagonals, Qiu2015} to schedule the allocated flows.
						\ELSIF{the network core $q$ is a not-all-stop OCS}
							\STATE Apply Sunflow~\cite{huang2016} to schedule the allocated flows.
						\ELSE
							\STATE Apply Reco-Sin~\cite{Tan2021} to schedule the allocated flows.
						\ENDIF
				\ENDFOR
   \end{algorithmic}
\label{Alg6}
\end{algorithm}

\subsection{Analysis}
Let $T_{i}^{r}$ and $T_{j}^{r}$ denote the post-rounding transmission times at input port $i$ and output port $j$, respectively, derived from the allocation in Algorithm~\ref{Alg5}. The following theorem presents the performance guarantee of Algorithm~\ref{Alg5}.

\begin{lem}\label{thm:5}
Algorithm~\ref{Alg5} guarantees that $T_{i}^{r}, T_{j}^{r} \leq \min\left\{\tau, m\right\} T^{*}$ for all $i \in \mathcal{I}$ and $j \in \mathcal{J}$.
\end{lem}
\begin{proof}
For brevity, we present only the results for port $i$, as the derivation and results for port $j$ are symmetric and analogous.
Let $\mathcal{A}$ be the set of assigned flows allocated to this port.
We can obtain the following inequality
\begin{align}\label{thm5:eq1}
T_{i}^{r} &=  \sum_{(i, j, k)\in \mathcal{A}} p_{ijkq} \notag\\
					&=  \sum_{(i, j, k)\in \mathcal{A}} x_{ijkq}p_{ijkq}+ (1-x_{ijkq})p_{ijkq} \notag\\
					&\leq  m\sum_{(i, j, k)\in \mathcal{A}} x_{ijkq}p_{ijkq} \notag\\
          &\leq m T^*.
\end{align}
The first inequality follows from the rounding procedure, where a variable is rounded up if and only if $x_{ijkq}$ is the maximum, whereas the second inequality arises directly from the validity of the constraints in LP($T$). Furthermore, the total number of flows assigned to this port is bounded above by $\tau$, which directly implies that $T_{i}^{r}\leq \tau T^{*}$ holds. Combining inequality (\ref{thm5:eq1}) with this result yields the desired bound $T_{i}^{r} \leq \min\left\{\tau, m\right\} T^{*}$. 
\end{proof}

Using Lemma~\ref{thm:5}, the following theorem can be established, whose proofs are analogous to those of Theorems \ref{thm:2}, \ref{thm:3}, and \ref{thm:4}.

\begin{thm}\label{thm:6}
Depending on the network architecture, the approximation guarantees are established as follows:
\begin{itemize}
\item If the network cores in the DCN are all-EPS, Algorithm~\ref{Alg6} achieves an approximation guarantee of $\min\left\{\tau, m\right\}$.
\item If the network cores in the DCN are all-not-all-stop OCS, Algorithm~\ref{Alg6} achieves an approximation guarantee of $2\min\left\{\tau, m\right\}$. 
\item If the network cores in the DCN are all-all-stop OCS, Algorithm~\ref{Alg6} achieves an approximation guarantee of $2\min\left\{2\tau-1, m+\tau-1\right\}$.
\end{itemize}
\end{thm}

In a hybrid switched architecture, the global performance guarantee of our algorithm is constrained by a structural bottleneck, determined by the performance of the least-performing switch architecture in the network. For instance, when the DCN deploys a combination of EPS and not-all-stop OCS cores, the overall approximation ratio is bounded by:
\begin{equation*}
2\min\left\{\tau, m\right\},
\end{equation*}
which tightly achieves to $4$ under the dual-core configuration ($m=2$). Conversely, if the DCN integrates EPS with all-stop OCS cores, the approximation ratio scales to:
\begin{equation*}
2\min\left\{2\tau-1, m+\tau-1\right\},
\end{equation*}
and simplifies to $2\tau+2$ when $m=2$. Crucially, this implies a dominance effect: the presence of even a single all-stop OCS core within the heterogeneous DCN dictates that the system-wide performance upper bound matches that of a pure all-stop OCS environment.

\section{Approximation Algorithm for Minimizing the Total Weighted Completion Time}\label{sec:Algorithm5}
This section introduces our algorithmic framework for addressing the coflow scheduling problem in heterogeneous parallel networks for minimizing total weighted completion time. The proposed algorithm builds upon the makespan minimization scheme introduced in the preceding section and is inspired by the pioneering work in~\cite{CHUDAK1999323}. Initially, flows are partitioned into distinct groups associated with specific time intervals, within each of which the makespan minimization algorithm is applied to generate the local schedule. Subsequently, based on the sequential order of these groups, the individual makespans are accumulated to derive the upper bound on flow completion times. Consequently, flows clustered within the same group share an identical upper bound.
Let $w_{k}$, $r_k$ and $C_k$ denote the weight of coflow $k$, the release time of coflow $k$ and the completion time of coflow $k$, respectively. Let $L_{i}=\max_{i\in \mathcal{I}} \left\{\sum_{k}\sum_{j}\max_{q}\left\{p_{ijkq}\right\}\right\}$ and $L_{j}=\max_{j\in \mathcal{J}} \left\{\sum_{k}\sum_{i}\max_{q}\left\{p_{ijkq}\right\}\right\}$. We have 
\begin{equation*}
T=\log \left(\max_{k\in \mathcal{K}} r_k+\max\left\{L_{i}, L_{j}\right\}\right).
\end{equation*}
First, we divide the time horizon into increasing time intervals: $[1,2], (2, 4], (4, 8],$ $ \ldots, (2^{T-1}, 2^{T}]$. Let $t_{l}=2^l$ where $l=0, 1, \ldots, T$.
We can formulate our problem as the following linear programming relaxation.
\begin{subequations}\label{coflow:main:wc}
\begin{align}
&    \text{min} && \sum_{k\in \mathcal{K}}w_k C_{k}                           && \tag{\ref{coflow:main:wc}}\\ 
&    \text{s.t.}. && \sum_{q\in \mathcal{M}} \sum_{l=1}^{T} x_{ijkql}=1 && \forall (i, j, k) \in \mathcal{F}   \label{coflow:wc:a}\\ 
&         && r_{k}+\sum_{q\in \mathcal{M}} p_{ijkq}\sum_{l=1}^{T}x_{ijkql}\leq C_{ijk}  && \forall (i, j, k) \in \mathcal{F} \label{coflow:wc:b}\\
&         && \sum_{u=1}^{l} \sum_{k\in \mathcal{K}} \sum_{j\in \mathcal{J}} p_{ijkq} x_{ijkqu}  \leq t_{l} && \forall i\in \mathcal{I}, \forall q\in \mathcal{M}, \notag\\
&  &&                  && \forall l\in [1, T] \label{coflow:wc:f}\\ 
&         && \sum_{u=1}^{l} \sum_{k\in \mathcal{K}} \sum_{i\in \mathcal{I}} p_{ijkq} x_{ijkqu}  \leq t_{l} && \forall j\in \mathcal{J}, \forall q\in \mathcal{M}, \notag\\
&  &&                  &&  \forall l\in [1, T] \label{coflow:wc:g}\\ 
&         && \sum_{l=1}^{T}t_{l-1} \sum_{q\in \mathcal{M}} x_{ijkql} \leq C_{ijk} && \forall (i, j, k) \in \mathcal{F} \label{coflow:wc:h}\\ 
&         && C_{ijk} \leq C_{k} && \forall (i, j, k) \in \mathcal{F} \label{coflow:wc:j}\\ 
&         && x_{ijkql}, C_{ijk}, C_{k}\geq 0 && \forall (i, j, k) \in \mathcal{F}, \notag\\
&  &&                  && \forall q\in \mathcal{M}, \notag\\
&         &&                                 &&  \forall l\in [1, T] \label{coflow:wc:i} \\
&  && x_{ijkql} = 0, && \text{if~} r_k+p_{ijkq}>t_{l}, \notag\\
&  &&                  && \forall (i, j, k)\in \mathcal{F}, \notag\\
&  &&                  && \forall q\in \mathcal{M}  \notag\\
&  &&                  && \forall l\in [1, T]\label{coflow:wc:k} 
\end{align}   
\end{subequations}
Within the linear programming formulation~(\ref{coflow:main:wc}), $x_{ijkql}$ serves as the relaxed decision variable dictating the routing alignment of flow $(i, j, k)$ on network core $q$ during the $l$-th interval, while $C_{ijk}$ and $C_{k}$ quantify the completion times of constituent flow $(i, j, k)$ and its parent coflow $k$, respectively. 

The structural constraints within the LP relaxation admit the following mathematical interpretations. Constraint~(\ref{coflow:wc:a}) guarantees routing conservation, dictating that each flow must be completely dispatched across the core switch fabric. Constraint~(\ref{coflow:wc:b}) represents the serialization boundary, stipulating that the transmission latency of flow $(i, j, k)$ cannot surpass its designated completion time. Constraints~(\ref{coflow:wc:f}) and (\ref{coflow:wc:g}) define the port and core capacity thresholds, bounding the aggregate traffic load within the time interval $t_{l}$. Constraint~(\ref{coflow:wc:h}) imposes an intrinsic lower bound on the single-flow scheduling latency, whereas constraint~(\ref{coflow:wc:j}) models the synchronicity synchronization bottleneck, ensuring that the global completion time of coflow $k$ is bounded by its last-finishing constituent flow. Constraint~(\ref{coflow:wc:i}) enforces the standard non-negativity domain for all core assignment and temporal metrics. Finally, constraint~(\ref{coflow:wc:k}) encapsulates hardware capability limitations: if a specific core is technologically incapable of delivering flow $(i, j, k)$ within the time interval $t_{l}$, this constraint prunes the search space by forcefully nullifying the corresponding decision variable to zero.

Given a set of coflows $\mathcal{K}$, the optimal fractional vectors $\textbf{C}$ and $\textbf{x}$ are initially solved via the LP relaxation framework~(\ref{coflow:main:wc}). To operationalize the continuous routing configuration in practical deployments, the fractional solution $\textbf{x}$ must be mapped into a valid discrete integer assignment matrix $\hat{\textbf{x}}$. Algorithm~\ref{Alg7} defines this structured time-interval rounding procedure. Formally, for each constituent flow $(i, j, k) \in \mathcal{F}$, we set $g(i, j, k) = \max \left\{g_{1}, g_{2}\right\}$ where
\begin{equation*}
g_{1} = \min \left\{ g \ \middle| \ \sum_{l=1}^{g}\sum_{q\in \mathcal{M}} x_{ijkql} \geq \frac{1}{2} \right\},
\end{equation*}
and
\begin{equation*}
g_{2} = \min \left\{ g \ \middle| \ C_{ijk} \leq t_{g} \right\}.
\end{equation*}

Let
\begin{equation*}
x_{ijkq} = \sum_{l=1}^{g(i, j, k)} x_{ijkql}, \quad \forall (i, j, k) \in \mathcal{F}, \ q \in \mathcal{M}.
\end{equation*}
To achieve a deterministic single-core mapping, a highest-density-first discretization policy is enforced: the scheme identifies the dominant core candidate $q \in \mathcal{M}$ that exhibits the largest accumulated fractional part $x_{ijkq}$, and rounds it up to $\hat{x}_{ijkq} = 1$ while nullifying all alternative core variables. Finally, the definitive set of active flows dispatched to core $q$ within the specific interval $l$ is explicitly categorized as follows:
\begin{equation*}
\mathcal{A}_{q, l} = \left\{ (i, j, k) \ \middle| \ g(i, j, k) = l \text{ and } \hat{x}_{ijkq} = 1 \right\}.
\end{equation*}

\begin{algorithm}
\caption{Rounding Procedure}
    \begin{algorithmic}[1]
				\STATE Solve the linear program (\ref{coflow:wc:k}) to obtain $\textbf{C}$ and $\textbf{x}$
				\FOR{each flow $(i, j, k)\in \mathcal{F}$} \label{Alg4_line1}
				    \STATE $g_{1} = \min \left\{ g \ \middle| \ \sum_{l=1}^{g}\sum_{q\in \mathcal{M}} x_{ijkql} \geq \frac{1}{2} \right\}$.
						\STATE $g_{2} = \min \left\{ g \ \middle| \ C_{ijk} \leq t_{g} \right\}$.
						\STATE $g(i, j, k)=\max \left\{g_{1}, g_{2}\right\}$.
						\STATE $x_{ijkq}= \sum_{l=1}^{g(i, j, k)} x_{ijkql}$ for each $q\in \mathcal{M}$.
						\STATE $q^* = \arg \max_{q\in \mathcal{M}} x_{ijkq}$
						\STATE $\mathcal{A}_{q^*, g(i, j, k)}=\mathcal{A}_{q^*, g(i, j, k)} \cup \left\{(i, j, k)\right\}$
						\STATE $\hat{x}_{ijkq^*}=1$
				\ENDFOR \label{Alg4_line2}
				\STATE \textbf{return} a rounding solution $\mathcal{A}$.
   \end{algorithmic}
\label{Alg7}
\end{algorithm}

Algorithm~\ref{Alg8} operationalizes the overall scheduling framework. While Algorithm~\ref{Alg8} resembles Algorithm~\ref{Alg3}, it differs fundamentally in that it schedules and transmits flows sequentially across consecutive time intervals by applying makespan-minimizing algorithms.

\begin{algorithm}
\caption{Scheduling Procedure}
    \begin{algorithmic}[1]
				\STATE Run Algorithm~\ref{Alg7} to obtain an allocated solution.	
				\FOR{each $q\in \mathcal{M}$ do in parallel}
					\FOR{$l\in [1, T]$}
						\IF{the network core $q$ is an EPS}
							\STATE Apply Birkhoff-von Neumann decomposition~\cite{marcus1959diagonals, Qiu2015} to schedule the allocated flows $\mathcal{A}_{q, l}$.
						\ELSIF{the network core $q$ is a not-all-stop OCS}
							\STATE Apply Sunflow~\cite{huang2016} to schedule the allocated flows $\mathcal{A}_{q, l}$.
						\ELSE
							\STATE Apply Reco-Sin~\cite{Tan2021} to schedule the allocated flows $\mathcal{A}_{q, l}$.
						\ENDIF
					\ENDFOR
				\ENDFOR
   \end{algorithmic}
\label{Alg8}
\end{algorithm}

\subsection{Analysis}
This section establishes the theoretical performance guarantees of our proposed algorithmic framework by deriving its exact approximation ratios. Let $T_{i}^{g}$ and $T_{j}^{g}$ denote the transmission times of the $i$-th input port and the $j$-th output port of a network core within the $g$-th group, respectively. For ease of exposition, the index of the network core is omitted here, as they all share the same upper bound. We initiate our analysis by quantifying the upper bounds on the cumulative transmission time of the $i$-th input port and the $j$-th output port of a network core within the $g$-th group relative to $t_{g}$ after executing the rounding procedure (Algorithm~\ref{Alg7}). Serving as the analytical building blocks, these port-level bounds are subsequently leveraged to evaluate the overall approximation ratios across the three foundational network core architectures: EPS, not-all-stop OCS, and all-stop OCS.

\begin{lem}\label{lem:3}
Algorithm~\ref{Alg5} guarantees that $T_{i}^{g}, T_{j}^{g} \leq 2m t_{g}$ for all $i \in \mathcal{I}$, $j \in \mathcal{J}$ and $g\in [1, T]$.
\end{lem}
\begin{proof}
For brevity, we present only the results for port $i$, as the derivation and results for port $j$ are symmetric and analogous.
Let $\mathcal{A}$ be the set of assigned flows allocated to this port.
We can obtain the following inequality
\begin{align}\label{lem:3:eq1}
T_{i}^{g} &=  \sum_{(i, j, k)\in \mathcal{A}} p_{ijkq} \notag\\
					&\leq  2 \sum_{(i, j, k)\in \mathcal{A}} \sum_{q\in \mathcal{M}} x_{ijkq} p_{ijkq} \notag\\
					&\leq  2 m \sum_{(i, j, k)\in \mathcal{A}} x_{ijkq}p_{ijkq} \notag\\
          &\leq  2 m t_{g}.
\end{align}
The first inequality follows from inequality $\sum_{l=1}^{g(i, j, k)}\sum_{q\in \mathcal{M}} x_{ijkql}\geq 1/2$. 
The second inequality follows from the rounding procedure, where a variable is rounded up if and only if $x_{ijkq}$ is the maximum, whereas the third inequality arises directly from the validity of the constraints in LP (\ref{coflow:main:wc}).

Furthermore, the total number of flows assigned to this port is bounded above by $\tau$, which directly implies that $T_{i}^{g}\leq \tau t_{g}$ holds. Combining inequality (\ref{lem:3:eq1}) with this result yields the desired bound $T_{i}^{g} \leq \min\left\{\tau, 2 m\right\} t_{g}$. 
\end{proof}

Let $C_{\max}^{g}$ denote the maximum completion time (makespan) of the $g$-th group. 
Using Lemma~\ref{lem:3}, the following three lemmas can be established, whose proofs are analogous to those of Theorems \ref{thm:2}, \ref{thm:3}, and \ref{thm:4}.

\begin{lem}\label{lem:4}
If the last flow to complete transmission is scheduled on an EPS network core, then $C_{\max}^{g}\leq \min\left\{\tau, 2 m\right\} t_{g}$.
\end{lem}

\begin{lem}\label{lem:5}
If the last flow to complete transmission is scheduled on a not-all-stop OCS network core, then $C_{\max}^{g}\leq 2 \min\left\{\tau, 2 m\right\} t_{g}$.
\end{lem}

\begin{lem}\label{lem:6}
If the last flow to complete transmission is scheduled on an all-stop OCS network core, then $C_{\max}^{g}\leq 2\min\left\{2\tau-1, 2m+\tau-1\right\}t_{g}$.
\end{lem}

Let $\hat{C}_{k}$ and $\hat{C}_{ijk}$ denote the actual completion time attained by our schedule.

\begin{thm}\label{thm:9}
Depending on the network architecture, the completion time bounds are established as follows:
\begin{itemize}
\item If the network cores in the DCN are all-EPS, then $\hat{C}_{k}\leq 8\min\left\{\tau, 2 m\right\} C_{k}$. 
\item If the network cores in the DCN are all-not-all-stop OCS, then $\hat{C}_{k}\leq 16 \min\left\{\tau, 2 m\right\} C_{k}$. 
\item If the network cores in the DCN are all-all-stop OCS, then $\hat{C}_{k}\leq 16\min\left\{2\tau-1, 2m+\tau-1\right\} C_{k}$.
\end{itemize}
\end{thm}
\begin{proof}
Without loss of generality, assume that flow $(i, j, k)$ represents the bottleneck flow that dictates the completion of coflow $k$, and let it be assigned to the $g$-th scheduling group. By definition, this bounding condition implies that $C_{k}=C_{ijk}$. Consequently, by sequentially invoking Lemmas~\ref{lem:4}, \ref{lem:5}, and \ref{lem:6}, the coflow completion time can be upper-bounded via the following inequality chain:
\begin{align*}
\hat{C}_{k} &\leq \sum_{l=1}^{g} C_{\max}^{l} \notag \\
            &\leq R \sum_{l=1}^{g} t_{l} \notag \\
            &\leq 2 R t_{g},
\end{align*}
where the scaling factor $R$ is instantiated based on the underlying core technology, defined as $R = \min\left\{\tau, 2 m\right\}$, $R = 2 \min\left\{\tau, 2 m\right\}$, and $R = 2\min\left\{2\tau-1, 2m+\tau-1\right\}$ for pure EPS, pure not-all-stop OCS, and pure all-stop OCS environments, respectively.

Subsequently, we analyze the analytical connection between $t_{g}$ and the optimal objective value $C_{ijk}$.
To establish a tight theoretical upper bound on the designated interval index $g = \max\{g_1, g_2\}$, we have two cases: 
\begin{itemize}
\item \textbf{Case 1:} $g$ is the minimal index for which $C_{ijk} \leq t_{g}$ holds.
\item \textbf{Case 2:} $g$ is the minimal index for which $\sum_{l=1}^{g}\sum_{q\in \mathcal{M}} x_{ijkql} \geq \frac{1}{2}$ holds.
\end{itemize}

For \textbf{Case 1}, the definition of minimality dictates that $t_{g-1} < C_{ijk} \leq t_{g}$. Leveraging the exponential interval scaling property where $t_{g} = 2t_{g-1}$, it immediately follows that $t_{g} \leq 2 C_{ijk}$. For \textbf{Case 2}, we construct the following analytical bounding chain:
\begin{align*}
\frac{t_{g-1}}{2} &\leq t_{g-1} \left( \sum_{l=g}^{T}\sum_{q\in \mathcal{M}} x_{ijkql} \right) \notag \\
                  &\leq \sum_{l=g}^{T}\sum_{q\in \mathcal{M}} t_{l-1} x_{ijkql} \notag \\
                  &\leq \sum_{l=1}^{T}\sum_{q\in \mathcal{M}} t_{l-1} x_{ijkql} \notag \\
                  &\leq C_{ijk}.
\end{align*}
By combining the outcomes of both cases and invoking the relationship $t_g = 2t_{g-1}$, we establish that $t_{g} \leq 4 C_{ijk}$. This crucial structural property directly completes the proof of the following theorem.
\end{proof}

Theorem~\ref{thm:9} establishes the performance guarantees of Algorithm~\ref{Alg8} under various network core architectures. Specifically, in a hybrid switched architecture, the global performance guarantee of the proposed algorithm is constrained by a structural bottleneck, which is determined by the performance of the least-performing switch architecture in the network.

\section{Results and Discussion}\label{sec:Results}
This section conducts a simulation-based evaluation to demonstrate the efficacy of our proposed scheduling framework against baselines. The subsequent discussion is structured around three core parts: first, we detail the methodology for synthesizing realistic traffic workloads; second, we introduce the state-of-the-art benchmark schemata selected for comparative analysis; and finally, we present a comprehensive performance evaluation alongside deep analytical insights derived from the simulation trajectories.

\subsection{Workload}
Following the benchmark methodology in the literature~\cite{shafiee2018improved}, we model the traffic characteristics of each coflow using a four-tuple $(W_{\min}, W_{\max}, L_{\min}, L_{\max})$. In this mathematical abstraction, $W_{\min}$ and $W_{\max}$ dictate the boundaries for the number of active ports, whereas $L_{\min}$ and $L_{\max}$ specify the range of the flow volumes. 

The synthetic workload generation follows a two-stage stochastic process. First, we independently sample two integers, $w_1$ and $w_2$, uniformly from the discrete range $[W_{\min}, W_{\max}]$, representing the number of active input and output ports, respectively. Consequently, the coflow is materialized with a structural matrix composition of $w_1 \times w_2$ constituent flows. Second, the transmission volume of each constituent flow is chosen uniformly from the interval $[L_{\min}, L_{\max}]$.

To mimic the heavy-tailed and skewed traffic distributions typical of production data centers, we establish four foundational coflow profiles: $(1, 5, 1, 10)$, $(1, 5, 10, 1000)$, $(5, N, 1, 10)$, and $(5, N, 10, 1000)$, where $N$ denotes the switch port count. These four profiles are mixed into the aggregate workload fabric with empirical density ratios of $41\%$, $29\%$, $9\%$, and $21\%$, respectively. To demonstrate the algorithmic resilience under diverse traffic patterns, we also evaluate the framework under varied distribution ratios in subsequent sensitivity analyses.

\subsection{Algorithms}
To establish a comprehensive performance benchmark, we evaluate and compare the following four scheduling schemes:
\begin{itemize}
\item \textbf{LP-Match (Linear Program-Matching):} Our proposed global optimization framework is implemented as outlined in Algorithm~\ref{Alg3}. It leverages the graph-theoretic bipartite matching routine to round the fractional solution while achieving theoretical performance guarantees.

\item \textbf{LP-Greedy (Linear Program-Greedy):} An accelerated variant of our framework designed to minimize computational overhead via the streamlined rounding procedure formalized in Algorithm~\ref{Alg4}. It restricts candidate assignment to cores satisfying $x_{ijkq} > 0$. Lines~\ref{Alg4:1}--\ref{Alg4:2} initialize the input and output port loads ($I_{iq}=0, O_{jq}=0$), and Lines~\ref{Alg4:3}--\ref{Alg4:4} greedily dispatch each flow to the core that minimizes the localized capacity bottleneck $L_{ijq}+p_{ijq}$, where
\begin{equation}
L_{ijq} = \max \left\{ I_{iq}, \, O_{jq} \right\}.
\end{equation}
This heuristic variant effectively eliminates the matching overhead while preserving an approximation ratio identical to that of LP-Match.

\item \textbf{LP-Max (Linear Program-Maximum):} Our proposed global optimization framework is implemented as outlined in Algorithm~\ref{Alg6}.

\item \textbf{Weaver~\cite{Huang2020}:} A coflow scheduling baseline designed for EPS architectures. It sequentially bifurcates flows into critical and non-critical tiers, routing the former to minimize coflow completion time (CCT) and the latter to balance core utilization, though it omits OCS reconfiguration delays.

\item \textbf{Greedy:} A conventional baseline that operates independently of any linear programming relaxation. To isolate the impact of purely localized heuristic decisions, it sequentially dispatches each flow to a network core chosen solely to optimize immediate workload balancing across the core switch fabric.
\end{itemize}

\begin{algorithm}
\caption{Greedy Rounding Procedure}
    \begin{algorithmic}[1]
				\STATE Run Algorithm~\ref{Alg1} to obtain $\textbf{x}$
				\STATE Set $\hat{x}_{ijkq}=0$ for each $(i, j, k)\in \mathcal{F}$ and $q\in \mathcal{M}$. \label{Alg4:1}
				\STATE Set $I_{iq}=0$ and $O_{jq}=0$ for each $i\in \mathcal{I}$, $j\in \mathcal{J}$ and $q\in \mathcal{M}$. \label{Alg4:2}
		    \FOR{each flow $(i, j, k)\in \mathcal{F}$}\label{Alg4:3}	
						\STATE $\mathcal{M^*} = \left\{q | x_{ijk1}> 0, q\in \mathcal{M} \right\}$
						\STATE $q^* = \arg\min_{q\in \mathcal{M^*}} \left\{L_{ijq}+p_{ijkq}\right\}$
						\STATE $\hat{x}_{ijkq^*}=1$
						\STATE $I_{iq^*}=I_{iq^*}+p_{ijkq^*}$
						\STATE $O_{jq^*}=O_{jq^*}+p_{ijkq^*}$
				\ENDFOR \label{Alg4:4}
				\STATE \textbf{return} a rounding solution $\hat{\textbf{x}}$.
   \end{algorithmic}
\label{Alg4}
\end{algorithm}

To ensure a fair comparison, the resulting makespan from each simulation instance is normalized against the optimal objective cost of the linear programming relaxation~(\ref{coflow:interval}), which serves as the fundamental lower-bound baseline. Consequently, the empirical approximation ratio is defined as the ratio of the scheduled makespan to the LP optimal value. 

The network parameters are configured to reflect highly heterogeneous core capacities. The link rate (transmission capacity per unit time) of the EPS cores is sampled uniformly from the integer interval $[1, 3]$, whereas that of the OCS cores is chosen from the high-speed integer interval $[150, 300]$. Furthermore, the fixed reconfiguration delay $\delta_{q}$ for each optical core $q \in \mathcal{M}_{o}^{1} \cup \mathcal{M}_{o}^{2}$ is drawn uniformly at random from the integer range $[1, 10]$. 

For a total core count $m$, our baseline architecture establishes a default mixture of $20\%$ EPS, $40\%$ not-all-stop OCS, and $40\%$ all-stop OCS cores. To systematically isolate the impact of different optical switching architectures, we introduce a structural control parameter $\phi$ that represents the percentage of not-all-stop OCS cores; accordingly, the fraction of all-stop OCS cores is dynamically adjusted to $80\% - \phi$. We systematically swept $\phi$ in our sensitivity experiments to observe the performance trade-offs between these two OCS architectures. To ensure statistical robustness, we generate 1000 independent random instances for each scenario and report the average performance.

\subsection{Results}
To comprehensively map the performance profile of our framework, this section conducts a multi-dimensional sensitivity analysis by systematically sweeping several key operational factors: the network core count $m$, the structural workload distributions, the coflow scaling density, and the architectural control parameter $\phi$. To establish a consistent comparison baseline, our default simulation environment is initialized with the following standard parameter space: a coflow count of 100, a total network core count of $m = 10$, and a switch port scale of $N = 10$.

Figure~\ref{fig:ratio3} illustrates the approximation ratios of various algorithms across a core count range of 5 to 25. Overall, our proposed LP-based algorithms (\mbox{LP-Greedy}, \mbox{LP-Match}, and \mbox{LP-Max}) consistently and significantly outperform both Weaver and the standard Greedy algorithm across all configurations. Specifically, as the number of cores increases, the approximation ratio of Weaver sharply degrades from 2.37 to 4.01. In contrast, our proposed methods maintain a much lower ratio, remaining below 2.66 even in the 25-core environment. Among our proposals, LP-Max generally yields the lowest approximation ratio at smaller and medium scales (5 to 20 cores). However, LP-Match demonstrates superior scalability, achieving the lowest approximation ratio of 2.62 at a core count of 25. These results effectively validate the efficacy and scalability of our proposed framework.

\begin{figure}[!ht]
    \centering
        \includegraphics[width=3.5in]{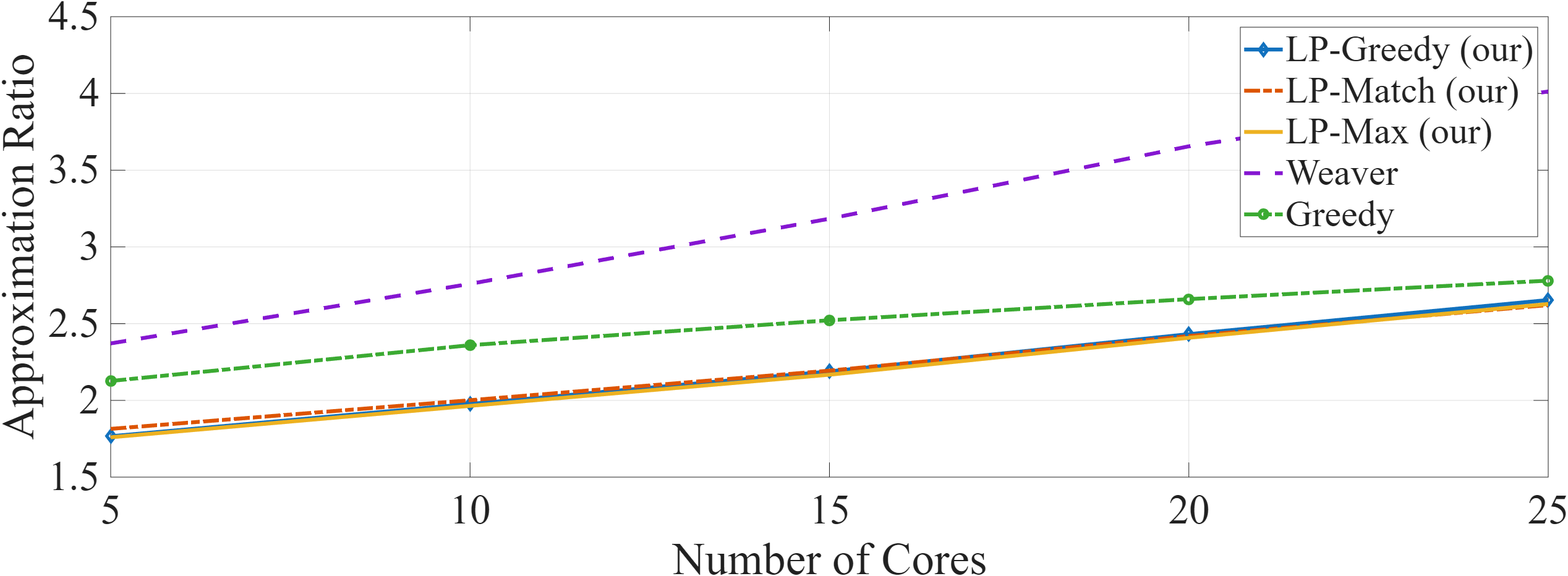}
    \caption{The performance of algorithms with 5 to 25 cores.}
    \label{fig:ratio3}
\end{figure}

To further evaluate the statistical performance and robustness of our approach, we present the box plots and corresponding quartiles of approximation ratios for a fixed network core count of 10 in Figure~\ref{fig:ratio4} and Table~\ref{table:ratio4}. The empirical results demonstrate that our proposed LP-based algorithms (\mbox{LP-Greedy}, \mbox{LP-Match}, and \mbox{LP-Max}) achieve superior performance and greater stability than all competitors. Specifically, \mbox{LP-Max} exhibits the lowest median ($Q_2$) of 1.9270, closely followed by \mbox{LP-Greedy} at 1.9326 and \mbox{LP-Match} at 1.9581. All three methods significantly outperform the standard Greedy algorithm, which has a median of 2.3645, and Weaver, which lags with a median of 2.6916. Notably, our LP-based framework effectively mitigates worst-case performance degradation. While the Weaver heuristic exhibits high variance and a severely stretched long tail with a maximum approximation ratio of 4.8329, our methods robustly cap their maximum ratios, with \mbox{LP-Match} and \mbox{LP-Greedy} limiting them to 2.8979 and 2.9362, respectively. Furthermore, although the standard Greedy algorithm occasionally reaches a minimum ratio of 1.5723, the entire interquartile range ($Q_1$ to $Q_3$) of our proposed methods is substantially shifted downward; in fact, the third quartiles ($Q_3$) of all our LP-based algorithms remain strictly below the first quartile ($Q_1$) of standard Greedy. This result ensures consistently high-quality solutions across the vast majority of test cases. The tight distributions, particularly for \mbox{LP-Max} and \mbox{LP-Greedy}, confirm that integrating linear programming relaxation provides a highly reliable guidance mechanism, ensuring both exceptional average-case performance and robust worst-case scheduling outcomes.

\begin{figure}[!ht]
    \centering
        \includegraphics[width=3.5in]{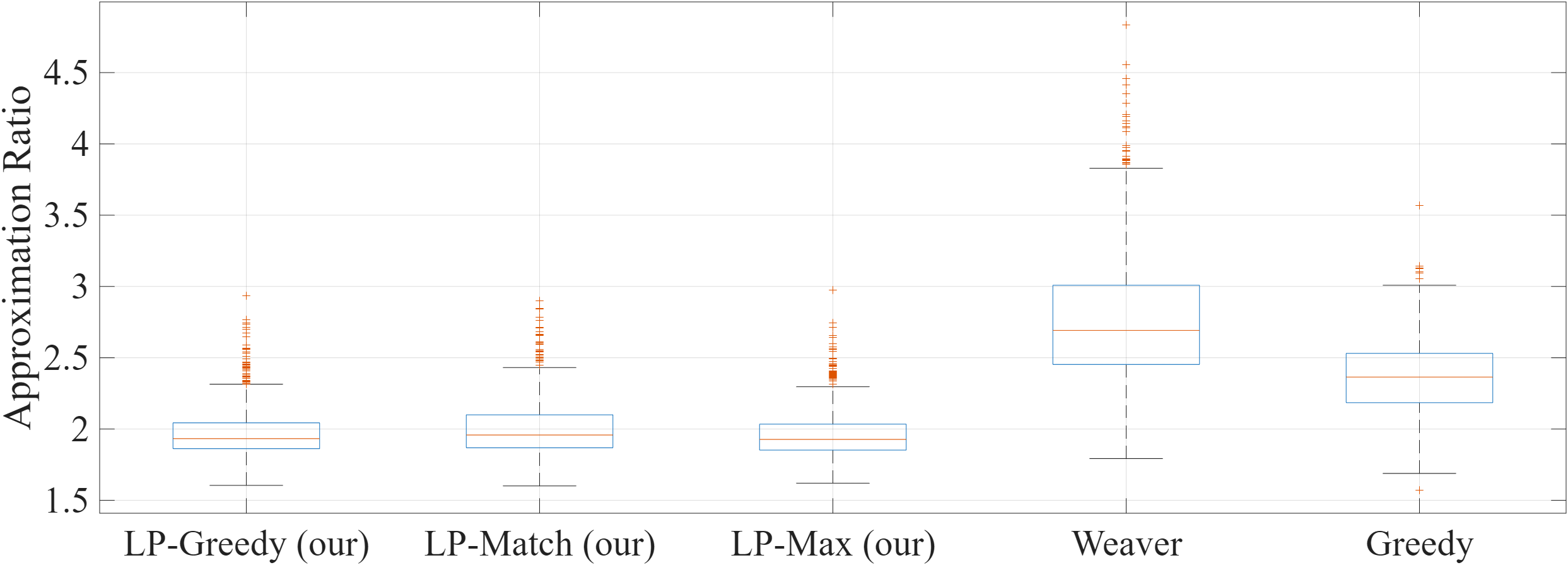}
    \caption{Box plot illustrating the approximation ratios for LP-Greedy, LP-Match, Weaver, and Greedy.}
    \label{fig:ratio4}
\end{figure}

\begin{table}[!ht]
\caption{The quartile values of Figure~\ref{fig:ratio4}.}
\centering
\begin{tabular}{|l|c|c|c|c|c|}
\hline
         & $Q_1$ & $Q_2$ & $Q_3$ & maximum & minimum \\ \hline
LP-Greedy  & 1.8624   &  1.9326      & 2.0433   &  2.9362       &  1.6047      \\ \hline
LP-Match   & 1.8685   &  1.9581      & 2.0994   &  2.8979       &  1.6010      \\ \hline
LP-Max   & 1.8522   &  1.9270      & 2.0341   &  2.9745       &  1.6203      \\ \hline
Weaver           & 2.4529   &  2.6916      & 3.0085   &  4.8329       &  1.7926      \\ \hline
Greedy           & 2.1848   &  2.3645      & 2.5306   &  3.5701       &  1.5723      \\ \hline
\end{tabular}
\label{table:ratio4}
\end{table}

Table~\ref{tableProportion} presents five distinct workload distributions: dense (Index 1), average (Index 2), extensive-data-flows (Index 3), small-data-flows (Index 4), and default (Index 5). Using these cases, Figure~\ref{fig:ratio5} assesses how different workload distributions affect the performance of the algorithms. The simulation results demonstrate the high adaptability and generalization capabilities of our proposed framework across various network scenarios. In the dense, average, and default cases (Indices 1, 2, and 5), our proposed algorithms (\mbox{LP-Greedy}, \mbox{LP-Match}, and \mbox{LP-Max}) consistently achieve lower approximation ratios compared to Weaver and the standard Greedy heuristic, with \mbox{LP-Max} securing the lowest ratios among them. Notably, in the extensive-data-flows case (Index 3)—where high-volume traffic necessitates global resource allocation—both \mbox{LP-Max} and \mbox{LP-Greedy} significantly outperform the other methods by reducing the approximation ratio to approximately 1.77, demonstrating their exceptional effectiveness in heavy-traffic environments. Conversely, the small-data-flows case (Index 4) proves the most challenging, leading to a substantial spike in approximation ratios across all algorithms due to workload fragmentation. In this setting, the standard Greedy algorithm yields a slightly lower ratio because its localized, immediate decisions can often circumvent the rounding overhead associated with LP relaxations in small-scale dynamics. Nevertheless, within our proposed framework, \mbox{LP-Match} demonstrates superior resilience by restricting the ratio to 6.0638, and all three proposed methods still maintain a significant performance advantage over Weaver. This comprehensive validation across multiple scenarios confirms that our LP-guided frameworks deliver reliable, robust performance across diverse traffic characteristics.

\begin{table}[!ht]
\caption{The four configurations: $(1, 5, 1, 10)$, $(1, 5, 10, 1000)$, $(5, N, 1, 10)$, and $(5, N, 10, 1000)$, with their proportion settings.}
\centering
\begin{tabular}{|c|c|}
\hline
 Index    & Workload Distribution \\ \hline
  1       &  $0\%$, $0\%$, $50\%$, $50\%$     \\ \hline
  2       &  $25\%$, $25\%$, $25\%$, $25\%$     \\ \hline
  3       &  $0\%$, $50\%$, $0\%$, $50\%$     \\ \hline
  4       &  $50\%$, $0\%$, $50\%$, $0\%$     \\ \hline
  5       &  $41\%$, $29\%$, $9\%$, $21\%$     \\ \hline
\end{tabular}
\label{tableProportion}
\end{table}

\begin{figure}[!ht]
    \centering
        \includegraphics[width=3.5in]{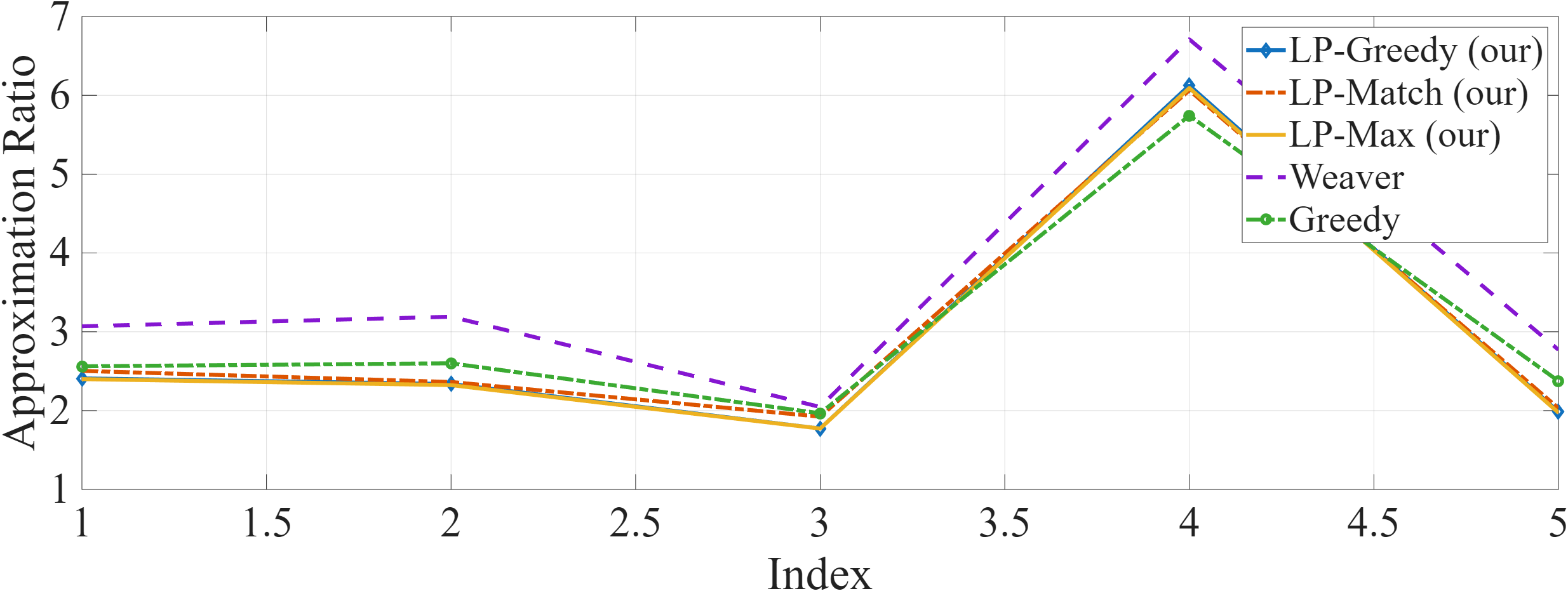}
    \caption{The approximation ratio for various workload distributions.}
    \label{fig:ratio5}
\end{figure}

Figure~\ref{fig:ratio6} illustrates how the workload scale—specifically, the number of coflows ranging from 120 to 200—affects the approximation ratio. Overall, our proposed LP-based framework demonstrates exceptional scalability and consistency under varying workload intensities. Notably, \mbox{LP-Max} consistently delivers the best performance across the entire range, tightly maintaining an optimal approximation ratio of approximately 1.95. Crucially, followed by \mbox{LP-Greedy}, which maintains a near-constant ratio of 1.96, both algorithms exhibit remarkable stability as the system scales. In contrast, our third variant, \mbox{LP-Match}, experiences a slight upward trend in its approximation ratio, rising from 2.02 to 2.15 as the number of coflows increases; nevertheless, it still significantly outperforms the baseline algorithms.
On the other hand, the standard Greedy heuristic plateaus at a higher ratio of approximately 2.34. Meanwhile, Weaver shows a gradual decrease from 2.74 to 2.64, yet it consistently performs the worst among all evaluated candidates. In summary, the narrow and stable gap between our proposed LP-based schemes and the theoretical optimum underscores the high efficacy and robust scalability of our LP-relaxation framework in large-scale coflow scheduling.

\begin{figure}[!ht]
    \centering
        \includegraphics[width=3.5in]{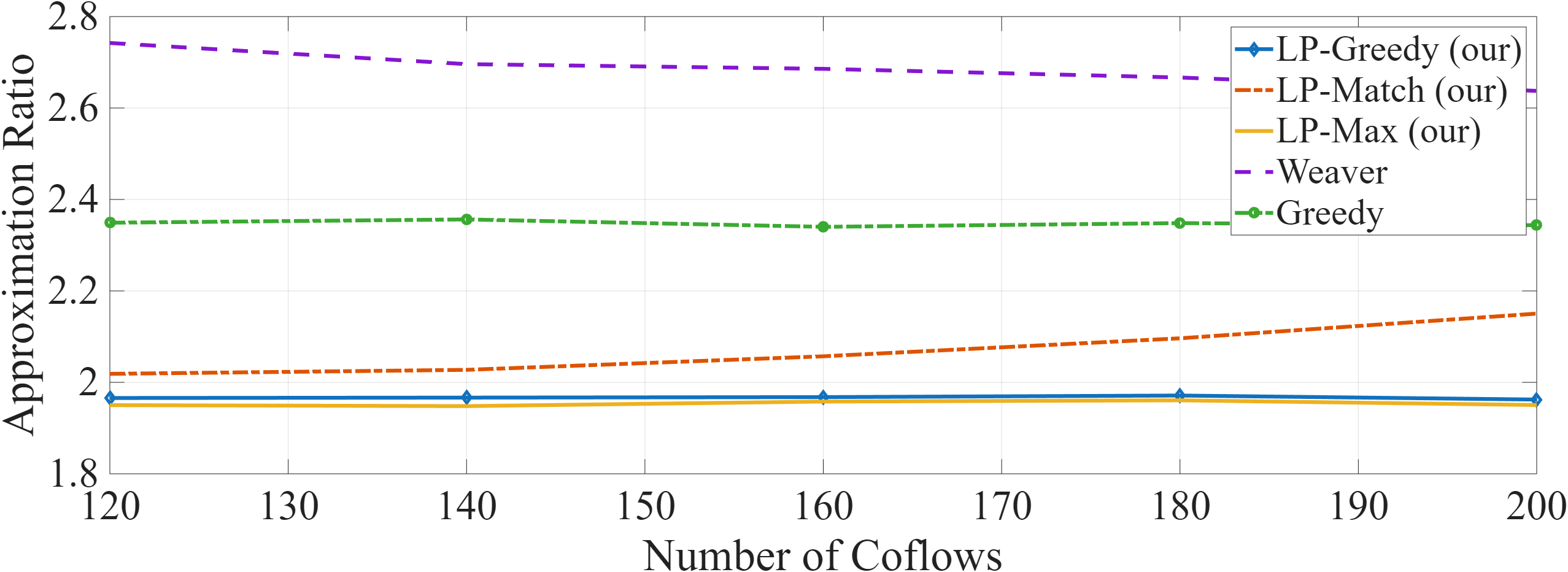}
    \caption{The performance of algorithms for coflows ranging from 120 to 200.}
    \label{fig:ratio6}
\end{figure}

Figure~\ref{fig:ratio7} illustrates how the OCS control parameter $\phi$ affects the approximation ratio. A larger $\phi$ indicates a higher proportion of flexible, not-all-stop OCS configurations. As expected, the approximation ratios for all evaluated algorithms decrease as $\phi$ increases from $0.4$ to $0.8$. This overall improvement is driven by the enhanced scheduling flexibility and reduced network bottlenecks provided by the not-all-stop OCS mode. Across the entire evaluation range, our proposed \mbox{LP-Max} algorithm consistently delivers the best performance, achieving an optimal approximation ratio of 1.6680 at $\phi = 0.8$. Closely following \mbox{LP-Max}, \mbox{LP-Greedy} also maintains a strong downward trajectory, achieving a ratio of 1.7341 at the highest flexibility scale. Notably, the standard Greedy heuristic shows the most substantial performance improvement, dropping sharply from 2.3796 to 1.8270. While this steep decline allows the standard Greedy algorithm to slightly outperform \mbox{LP-Match} (1.8308) at $\phi = 0.8$, it remains significantly inferior to both \mbox{LP-Max} and \mbox{LP-Greedy}. This trend indicates that in highly flexible network environments (i.e., high $\phi$), localized greedy choices encounter fewer structural constraints; however, they still cannot match the global optimization capabilities of our top-performing LP variants. Under restrictive hardware settings (e.g., $\phi = 0.4$), the standard Greedy heuristic deteriorates significantly due to its lack of foresight. In contrast, our LP-based frameworks consistently provide high-quality, reliable scheduling solutions across all OCS configurations, demonstrating that linear programming relaxations effectively provide global guidance that is resilient to hardware variations.

\begin{figure}[!ht]
    \centering
        \includegraphics[width=3.5in]{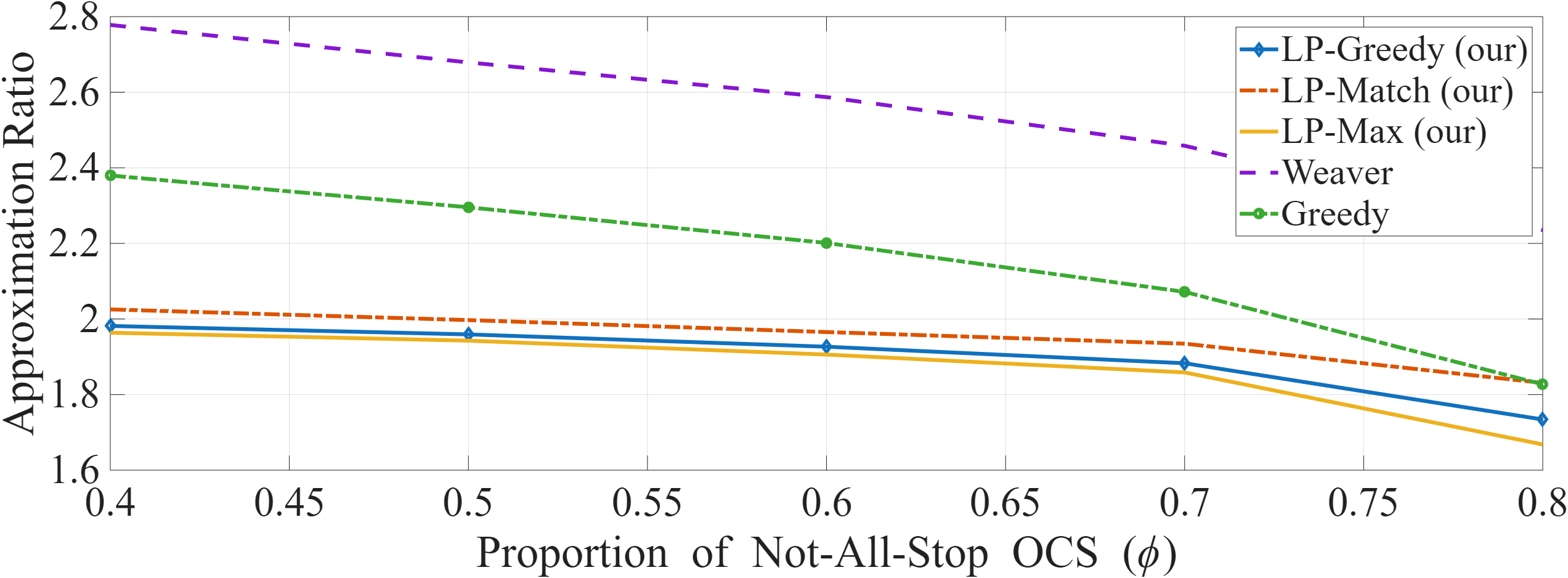}
    \caption{The performance of algorithms at various $\phi$ values.}
    \label{fig:ratio7}
\end{figure}

\section{Concluding Remarks}\label{sec:Conclusion}
This paper investigates the coflow scheduling problem in heterogeneous parallel networks to minimize the makespan. Crucially, while existing literature~\cite{10411848, 11200997} for $m=2$ network cores relies on flow splitting—which introduces substantial practical overhead—no approximation algorithm with provable performance guarantees has been proposed for the more rigid yet practical non-splitting coflow scheduling problem, even for the $m=2$ case, let alone for general hybrid architectures. To bridge this significant gap, we introduced the first unified polynomial-time approximation algorithm tailored for a general hybrid multiple-network-core architecture ($m \ge 2$) encompassing Electronic Packet Switches (EPS), not-all-stop Optical Circuit Switches (OCS), and all-stop OCS, without requiring flow splitting. To establish its theoretical framework, we systematically derived approximation ratios for each standalone switch architecture with $m \ge 2$ as building blocks to synthesize the comprehensive hybrid bound. To minimize the makespan, when evaluated under the conventional $m=2$ setting, our non-splitting algorithm yields exceptionally tight performance guarantees, achieving approximation ratios of $2$ for pure EPS, $4$ for pure not-all-stop OCS, and $2\tau+2$ for pure all-stop OCS, demonstrating that our approach maintains superior efficiency without incurring any splitting overhead. For the general case where $m \ge 2$, the derived bounds are $\min\left\{\tau, m\right\}$, $2\min\left\{\tau, m\right\}$, and $2\min\left\{2\tau-1, m+\tau-1\right\}$, respectively. To minimize the total weighted coflow completion time, we prove that the algorithm guarantees an approximation ratio of: $8\min\left\{\tau, 2 m\right\}$ for pure EPS, $16 \min\left\{\tau, 2 m\right\}$ for pure not-all-stop OCS, and $16\min\left\{2\tau-1, 2m+\tau-1\right\}$ for pure all-stop OCS, respectively. By integrating these individual mathematical components, we proved that the overall performance guarantee in the hybrid network core is upper-bounded by the performance of the least-performing switch architecture in the network.


%

\appendix
\section*{Birkhoff-von Neumann Decomposition}\label{sec:BnV}
This section demonstrates that minimizing the makespan of a coflow within a network core can be efficiently solved in polynomial time. The method employs the Birkhoff-von Neumann Decomposition to decompose the demand matrix of a coflow into permutation matrices. The flows are then transmitted sequentially according to these matrices. According to the Birkhoff-von Neumann Theorem~\cite{marcus1959diagonals}, any doubly stochastic matrix can be represented as a sum of permutation matrices, as stated in Theorem~\ref{BvN}.

\begin{thm}\label{BvN}
\cite{marcus1959diagonals} \textbf{(BvN theorem)} Doubly stochastic matrix $A\in \mathbb{R}^{N\times N}$ can be decomposed as $A=\sum_{i=1}^{k} c_{i}P_{i}$, where $c_{i}\in (0, 1)$ and $P_{i}$ is a permutation matrix for each $i$, $\sum_{i=1}^{k} c_{i}=1$, $k\leq N^2-2N+2$.
\end{thm}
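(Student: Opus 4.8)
The plan is to prove the two assertions separately: first that $A$ admits \emph{some} decomposition into permutation matrices with positive coefficients summing to one, and then that $N^2-2N+2$ terms suffice. For the existence part, the central observation is that the support of $A$ always contains a perfect matching. I would form the bipartite graph on rows $\{1,\dots,N\}$ and columns $\{1,\dots,N\}$ with an edge $(i,j)$ precisely when $a_{ij}>0$, and verify Hall's condition: for any set $S$ of rows, all of the mass $\sum_{i\in S}\sum_{j} a_{ij}=|S|$ is confined to the columns $N(S)$ reachable from $S$, yet those columns carry total mass exactly $|N(S)|$ (one per column), forcing $|N(S)|\ge|S|$. Hall's theorem then yields a permutation $\sigma$ with $a_{i,\sigma(i)}>0$ for every $i$, i.e.\ a permutation matrix $P_\sigma$ whose support lies inside that of $A$.

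Second, I would extract permutation matrices iteratively. Setting $c=\min_i a_{i,\sigma(i)}>0$, the matrix $A-cP_\sigma$ is nonnegative and has every row-sum and column-sum equal to $1-c$. If $c=1$ we are finished; otherwise $(A-cP_\sigma)/(1-c)$ is again doubly stochastic but has strictly fewer positive entries than $A$, since the minimizing position has been driven to zero. Recursing on this smaller-support matrix terminates and produces $A=\sum_i c_i P_i$ with $c_i>0$ and $\sum_i c_i=1$. This construction alone already bounds the number of terms by roughly $N^2-N+1$, because each step destroys at least one of the at most $N^2$ positive entries while a single permutation matrix has exactly $N$ of them.

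To sharpen the count to $N^2-2N+2=(N-1)^2+1$, I would switch to a dimension-plus-Carathéodory argument. The doubly stochastic matrices form a polytope in $\mathbb{R}^{N\times N}$ cut out by the $2N$ equalities $\sum_j a_{ij}=1$ and $\sum_i a_{ij}=1$; these have rank exactly $2N-1$, the single dependency being that the total mass equals $N$ whether summed by rows or by columns. Hence the affine hull has dimension $N^2-(2N-1)=(N-1)^2$. The decomposition built above shows that every extreme point of this polytope is a single permutation matrix, so by Carathéodory's theorem any doubly stochastic $A$ is a convex combination of at most $(N-1)^2+1$ of them, which is the claimed bound.

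The step I expect to be the main obstacle is obtaining the \emph{exact} constant rather than a weaker polynomial bound. The iterative extraction is elementary but wasteful; the precise value $N^2-2N+2$ hinges on computing the rank of the row/column-sum constraints to be $2N-1$ and on identifying the permutation matrices as the vertices of the Birkhoff polytope, after which Carathéodory's theorem delivers the bound directly.
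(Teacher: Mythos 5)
Your proposal is correct, but note that the paper does not prove this statement at all: Theorem~\ref{BvN} is imported by citation from Marcus and Ree~\cite{marcus1959diagonals} and used as a black box to justify Algorithm~\ref{BvN_alg}, so there is no internal proof to compare against. Your argument is the standard complete one: the Hall-condition computation ($|S|=\sum_{i\in S}\sum_{j\in N(S)}a_{ij}\le |N(S)|$) correctly yields a permutation inside the support, the iterative extraction with $c=\min_i a_{i,\sigma(i)}$ correctly shows $A$ lies in the convex hull of permutation matrices, and the sharpened count follows from Carath\'eodory once you observe that the row/column-sum constraints have rank exactly $2N-1$ (the incidence matrix of the connected bipartite graph $K_{N,N}$), so the Birkhoff polytope sits in an affine subspace of dimension $(N-1)^2$ and $(N-1)^2+1=N^2-2N+2$ terms suffice. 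Two minor remarks: the detour through extreme points is unnecessary for the upper bound, since Carath\'eodory applies directly to the convex-hull membership established in your second step; and the theorem's requirement $c_i\in(0,1)$ degenerates when $A$ is itself a permutation matrix (forcing $k=1$, $c_1=1$), an edge case in the statement itself rather than a flaw in your argument.
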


\begin{algorithm}
\caption{Birkhoff-von Neumann Decomposition~\cite{Qiu2015}}
    \begin{algorithmic}[1]
				\STATE $\vartriangleright$ Augment $D$ to a matrix $\widetilde{D}=\left(\widetilde{d}_{i,j}\right)_{i,j=1}^{N}$, where $\widetilde{d}_{i,j}\geq d_{i,j}$ for all $i$ and $j$, and all row and column sums of $\widetilde{D}$ are equal to $\rho(D)$. \label{BvN_alg:1}
				\STATE Let $\eta=\min \left\{\min_{i}\left\{\sum_{j'=1}^{N}d_{i,j'}\right\},\min_{j}\left\{\sum_{i'=1}^{N}d_{i',j}\right\}\right\}$ be the minimum of row sums and column sums. 
				\STATE let $\rho(D)=\max \left\{\max_{i}\left\{\sum_{j'=1}^{N}d_{i,j'}\right\},\max_{j}\left\{\sum_{i'=1}^{N}d_{i',j}\right\}\right\}$.
				\STATE $\widetilde{D}\leftarrow D$
				\WHILE{$\eta< \rho(D)$}
						\STATE $i^*\leftarrow \arg\min_{i}\sum_{j'=1}^{N}\widetilde{d}_{i,j'}$, $j^*\leftarrow \arg\min_{j}\sum_{i'=1}^{N}\widetilde{d}_{i',j}$.
						\STATE $p=\rho-\min\left\{\sum_{j'=1}^{N}\widetilde{d}_{i^*,j'}, \sum_{i'=1}^{N}\widetilde{d}_{i',j^*}\right\}$.
						\STATE $E_{i,j}=1$ if $i=i^*$ and $j=j^*$, and $E_{i,j}=0$ otherwise.
						\STATE $\widetilde{D}\leftarrow \widetilde{D}+pE$.
						\STATE $\eta=\min \left\{\min_{i}\left\{\sum_{j'=1}^{N}\widetilde{d}_{i,j'}\right\},\min_{j}\left\{\sum_{i'=1}^{N}\widetilde{d}_{i',j}.\right\}\right\}$
				\ENDWHILE \label{BvN_alg:2}
				\STATE $\vartriangleright$ Decompose $\widetilde{D}$ into permutation matrices $\Pi$. \label{BvN_alg:3}
				\WHILE{$\eta< \rho(D)$}
						\STATE Define an $N\times N$ binary matrix $G$ where $G_{i,j}=1$ if $\widetilde{d}_{i,j}>0$, and $G_{i,j}=0$ otherwise, for $i,j=1,\ldots, N$.
						\STATE Interpret $G$ as a bipartite graph, where an (undirected) edge $(i, j)$ is present if and only if $G_{i,j}=1$. 
						\STATE Find a perfect matching $M$ on $G$ and define an $N\times N$ binary matrix $\Pi$ for the matching by $\Pi_{i,j}=1$ if $(i,j)\in M$, and $\Pi_{i,j}=0$ otherwise, for $i,j=1,\ldots, N$.
						\STATE $\widetilde{D}\leftarrow \widetilde{D} -q\Pi$, where $q=\min\left\{\widetilde{d}_{i,j}| \Pi_{i,j}>0\right\}$. Process coflow $D$ using the matching $M$ for $q$ time slots. \label{BvN_alg:4}
				\ENDWHILE
   \end{algorithmic}
\label{BvN_alg}
\end{algorithm}

Algorithm~\ref{BvN_alg} employs Birkhoff-von Neumann decomposition for coflow transmission, as proposed by Qiu \textit{et al.}~\cite{Qiu2015}.
Since this algorithm handles only one coflow, we can omit the coflow index $k$. The demand matrix of the coflow is denoted as $D=\left(d_{i,j}\right)_{i,j=1}^{N}$.
Let
\begin{eqnarray*}
\rho(D)=\max \left\{\max_{i}\left\{\sum_{j'=1}^{N}d_{i,j'}\right\},\max_{j}\left\{\sum_{i'=1}^{N}d_{i',j}\right\}\right\}
\end{eqnarray*}
be the load of coflow $D$. The lines~\ref{BvN_alg:1} to \ref{BvN_alg:2} of Algorithm~\ref{BvN_alg} augment $D$ to a matrix $\widetilde{D}=\left(\widetilde{d}_{i,j}\right)_{i,j=1}^{N}$, where $\widetilde{d}_{i,j}\geq d_{i,j}$ for all $i$ and $j$, and all row and column sums of $\widetilde{D}$ are equal to $\rho(D)$. Hence, $\widetilde{D}/\rho(D)$ is a doubly stochastic matrix, where $\sum_{j'=1}^{N}\widetilde{d}_{i,j'}/\rho(D)=1$ and $\sum_{i'=1}^{N}\widetilde{d}_{i',j}/\rho(D)=1$ hold for all $i$ and $j$. According to Theorem~\ref{BvN}, $\widetilde{D}$ can be decomposed into permutation matrices $\Pi$ such that $\widetilde{D}=\sum_{u=1}^{U}q_{u}\Pi {u}$, where $\Pi {u}$ is a permutation matrix and $q_{u}\in \mathbb{N}$ satisfies $\sum_{u=1}^{U}q_{u}=\rho(D)$. The lines~\ref{BvN_alg:3} to \ref{BvN_alg:4} of Algorithm~\ref{BvN_alg} decompose $\widetilde{D}$ into permutation matrices $\Pi$. We first convert $\widetilde{D}$ into a binary matrix $G$ and represent $G$ as a bipartite graph. Each permutation matrix $\Pi$ can be obtained from a maximal matching $M$ on $G$ considering the priority $t$ of each flow. Then, process coflow $D$ using the matching $M$ for $q$ time slots, where $q=\min\left\{\widetilde{d}_{i,j}| \Pi_{i,j}>0\right\}$. More specifically, process dataflow from input $i$ to output $j$ for $q$ time slots, if $(i,j)\in M$ and processing requirement remains, for $i,j=1,\ldots, N$. Then, we have the following theorem.

\begin{thm}\label{BvN2}
Algorithm~\ref{BvN_alg} is a polynomial time algorithm that finishes processing coflow $D$ in $\rho(D)$ time slots.
\end{thm}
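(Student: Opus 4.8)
The statement has two parts, and the plan is to establish each separately: first that the total processing time equals $\rho(D)$, and second that the schedule is produced in polynomial time.

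For the time-slot bound, I would start from the decomposition guarantee already recorded just before the statement. Lines~\ref{BvN_alg:1}--\ref{BvN_alg:2} build an augmented matrix $\widetilde{D}$ with $\widetilde{d}_{i,j}\geq d_{i,j}$ and every row sum and column sum equal to $\rho(D)$; hence $\widetilde{D}/\rho(D)$ is doubly stochastic, and Theorem~\ref{BvN} supplies a decomposition $\widetilde{D}=\sum_{u=1}^{U}q_{u}\Pi_{u}$ with $\sum_{u=1}^{U}q_{u}=\rho(D)$. Since the algorithm processes the matching associated with $\Pi_{u}$ for exactly $q_{u}$ slots and does so sequentially, the total number of slots consumed is $\sum_{u=1}^{U}q_{u}=\rho(D)$. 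To confirm completion of every flow, I would note that the number of slots during which flow $(i,j)$ is served equals $\sum_{u:\,(\Pi_{u})_{i,j}=1}q_{u}=\widetilde{d}_{i,j}\geq d_{i,j}$, so each flow receives at least its demand and coflow $D$ indeed finishes within these $\rho(D)$ slots.

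For polynomiality, I would show the decomposition loop terminates after few iterations. The structural fact I would rely on is that at the start of each iteration the residual matrix still has all row sums equal to all column sums, since subtracting a scaled permutation matrix decreases them uniformly; by Hall's theorem such a matrix with equal positive row/column totals admits a perfect matching in its support, so a genuine permutation matrix can always be extracted. Choosing $q=\min\{\widetilde{d}_{i,j}:\Pi_{i,j}>0\}$ forces at least one positive entry to vanish each iteration, so the iteration count is at most the number of positive entries, which Theorem~\ref{BvN} bounds by $N^2-2N+2$. Each iteration performs a bipartite maximal-matching computation on the support graph $G$, which runs in time polynomial in $N$, and the augmentation in lines~\ref{BvN_alg:1}--\ref{BvN_alg:2} is likewise polynomial; combining these bounds yields an overall polynomial running time for producing the compact (matching, duration) schedule.

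The part that needs the most care is the decomposition invariant. I expect the main obstacle to be verifying cleanly that the support of the residual matrix always contains a perfect matching, as this is precisely what guarantees each iteration returns a full permutation matrix rather than a partial matching. This is where the equal-row/column-sum property inherited from the doubly stochastic structure, together with Hall's (equivalently K\"onig's) theorem, does the real work, and it is also what keeps the iteration count—and therefore the running time—polynomial.
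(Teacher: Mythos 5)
Your proof is correct and follows essentially the same route the paper sketches in the prose preceding the theorem (augment $D$ so all row and column sums equal $\rho(D)$, invoke Theorem~\ref{BvN} so that $\sum_{u}q_{u}=\rho(D)$, and justify each extraction of a perfect matching via Hall's theorem on the equal-row/column-sum residual); the paper itself states the theorem without a written proof, relying on this description and on Qiu \emph{et al.}, so your argument simply fills in those details. One minor slip worth fixing: the bound $N^{2}-2N+2$ in Theorem~\ref{BvN} counts the permutation matrices in some decomposition, not the positive entries of $\widetilde{D}$; your iteration count should instead be bounded by the trivial estimate of at most $N^{2}$ positive entries, each iteration zeroing at least one, which still gives polynomial running time.
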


\end{document}